\documentclass[11pt]{article}
\usepackage{fullpage}
\usepackage{amsmath}
\usepackage{amsfonts}
\usepackage{amsthm}
\usepackage{amssymb}
\usepackage{dsfont}
\usepackage{graphicx}
\usepackage{caption}
\usepackage{subcaption}
\usepackage{hyperref}
\usepackage{mathabx}
\usepackage{mathrsfs}
\usepackage{color}
\newtheorem{theorem}{Theorem}
\newtheorem{lemma}[theorem]{Lemma}
\newtheorem{claim}[theorem]{Claim}
\newtheorem{cor}[theorem]{Corollary}
\newtheorem{definition}{Definition}
\newtheorem{example}{Example}
\newtheorem{remark}{Remark}

\newcommand{\di}{\mathrm{d}}

\DeclareMathOperator*{\argmax}{\arg\max}

\allowdisplaybreaks
\title{The Welfare Gap of Strategic Storage:\\ Universal Bounds and Price Non-Linearity}

\author{Zhile Jiang \thanks{Department of Computer Science, Aarhus University, Denmark. Email: {\tt zhile@cs.au.dk}.} \and Xinhao Nie \thanks{Department of Computer Science, Aarhus University, Denmark. Email: {\tt nie@cs.au.dk}.} \and Stratis Skoulakis \thanks{Department of Computer Science, Aarhus University, Denmark. Email: {\tt stratis@cs.au.dk}.}}

\date{}

\begin{document}

\maketitle

\begin{abstract}
This paper studies the efficiency of battery storage operations in electricity markets by comparing the social welfare gain achieved by a central planner to that of a decentralized profit-maximizing operator. The problem is formulated in a generalized continuous-time stochastic setting, where the battery follows an adaptive, non-anticipating policy subject to periodicity and general convex constraints. We quantify the efficiency loss by bounding the ratio of the optimal welfare gain to the gain under profit maximization. First, for linear price functions, we prove that this ratio is tightly bounded by $4/3$. We show that this bound is a structural invariant: it is robust to arbitrary stochastic demand processes and accommodates general convex operational constraints. Second, we demonstrate that the efficiency loss can be unbounded for general convex price functions even in a canonical discrete-demand benchmark, so convexity alone is insufficient to guarantee market efficiency. Third, within the same benchmark we analyze monomial price functions, where the degree controls the curvature, and prove that the loss grows with the degree yet remains bounded by $2$. Finally, we extend the linear analysis to $n$ competing batteries, where a potential-game argument gives a unique equilibrium and an efficiency loss that decreases to $1$ as the number of batteries grows.

\end{abstract}

\newpage

\section{Introduction}

Battery Energy Storage Systems (BESS) are increasingly central to electricity markets, smoothing the volatility introduced by intermittent renewable generation~\cite{GRS16, SY21, UHLE13} and high-load demand sectors~\cite{CWHetal25, BBM23}. BESS operators exploit temporal arbitrage, absorbing surplus energy during off-peak periods and injecting it during peak demand, thereby reducing price fluctuations~\cite{FFTFGSS16, ZAPTH23}. However, a fundamental tension arises: these operators are profit-maximizing agents that capitalize on price differentials, whereas the grid operator aims to minimize total generation cost. This misalignment between private and social objectives naturally invites a Price of Anarchy (PoA) analysis. While extensive literature addresses operational optimization of individual BESS units~\cite{HARBB22, SGMD22, THMW20, WM21, ZAPTH23}, formal PoA analysis of strategic storage has only recently begun.

Anunrojwong et al.~\cite{ABBX25} took the first step, formulating the battery storage problem as a game between strategic price-making batteries and the market. In a discrete two-period model with stochastic demand and linear supply curves, they derived closed-form equilibrium strategies for both the monopoly and the $n$-battery Cournot settings. For a single battery they proved the PoA lies between $9/8$ and $4/3$ depending on market parameters, and for $n$ competing batteries they showed that a unique Cournot equilibrium exists and that the PoA converges to~$1$ at rate $1/n^2$. Their analysis, however, is confined to linear pricing in a two-period structure and does not incorporate general operational constraints on the battery.

These modeling restrictions leave open several natural questions: whether the $4/3$ worst-case PoA persists under continuous-time dynamics and binding operational constraints, how the efficiency loss behaves across non-linear price functions, and whether the equilibrium and convergence structure of the multi-battery game extends to richer, infinite-dimensional strategy spaces.

We resolve these questions within a continuous-time stochastic framework that accommodates arbitrary demand distributions and general convex operational constraints. Our results establish that the $4/3$ PoA is a structural invariant of linear pricing, delineate the sharp boundary between pricing regimes with bounded and unbounded efficiency loss, and characterize how competition restores efficiency in an infinite-dimensional strategy space.

\subsection{Our Contribution}

We summarize our main results here. Formal statements appear in Sections~\ref{sec:linear}--\ref{sec:multi-battery}, within the general framework of Section~\ref{sec:model}.

For linear price functions, we establish a tight $PoA = 4/3$ (Theorem~\ref{thm:PoA-linear}) for arbitrary stochastic demand and any convex constraint set $\Omega$. The matching upper bound of $4/3$ was previously known only for the two-period unconstrained model of~\cite{ABBX25}. Our result shows it is a structural invariant of linear pricing that persists in continuous time and under arbitrary convex operational constraints, independent of the demand process. The proof uses a variational inequality in a function-space inner product that isolates the gap between the welfare and revenue objectives.

For general convex price functions, we prove that the PoA can be unbounded even in the simplest deterministic setting $\mathcal{I}_{\text{step}}$ (Theorem~\ref{thm:PoA-counterexample}), and we extend this to convex polynomials through a Bernstein approximation argument (Corollary~\ref{cor:poly-lb}). Linearity is therefore a sharp boundary, since any departure into general convexity can destroy bounded efficiency guarantees.

For monomial price functions $P(z)=\alpha z^d$ within $\mathcal{I}_{\text{step}}$, we prove a universal upper bound $PoA \leq 2$ for all degrees~$d$ (Theorem~\ref{thm:PoA-mono-ub}), a tight bound of $27/19$ for the quadratic case $d=2$ (Theorem~\ref{thm:PoA-mono-2-ub}), and degree-dependent lower bounds approaching $e/(e-1)$ as $d\to\infty$ (Theorem~\ref{thm:PoA-mono-lb}). The PoA thus increases with the degree of non-linearity but stays bounded, in contrast to the unboundedness for general convex functions.

Extending to $n$ identical batteries competing under linear pricing, we prove that the game admits a strictly concave potential function, yielding a unique pure Nash equilibrium that is necessarily symmetric, with $PoA=(n+1)^2/(n(n+2))$ (Theorem~\ref{thm:PoA-linear-n}). This decreases monotonically to $1$ as $n\to\infty$, so competition fully restores efficiency in the limit. The two-period analysis of~\cite{ABBX25} established a unique Cournot equilibrium and $PoA\to 1$. Our potential-game argument applies in the continuous-time function space and yields an exact closed-form PoA.

\subsection{Other Related Works}

\noindent \textbf{Strategic behavior.} Our model treats the battery as a single strategic price-maker, reflecting the market power that storage operators possess due to high market concentration and transmission constraints~\cite{BBW02}. The theoretical literature on strategic storage has established that profit-maximizing operators provide less price smoothing than a social planner and can exploit inventory constraints to enhance market power~\cite{CM01, Bushnell03}. Sioshansi~\cite{Sioshansi10} shows that such operators have incentives to withhold capacity or alter dispatch to manipulate price spreads, and that without regulation, private storage can reduce social welfare by amplifying price volatility~\cite{Sioshansi14}. Empirical studies further quantify this efficiency gap~\cite{BBCKLW23, MohsenianRad15, ZLHN24}.

\noindent \textbf{Pricing mechanisms.} Wholesale electricity markets mostly operate under a Pay-as-Clear mechanism, where the market price equals the marginal cost of the marginal generator~\cite{FR03}. The total social generation cost is then the integral of the price function. Linear price-response models are commonly adopted for tractability~\cite{ABBX25, Bushnell03}, but empirical evidence shows significant convexity in the price--load relationship, with marginal costs rising disproportionately near supply limits~\cite{KR05, LW04, BL02}.

\noindent \textbf{Demand modeling.} The literature on stochastic demand modeling decomposes load into deterministic seasonal trends and stochastic fluctuations~\cite{Taylor03, FH11}. Early works used diffusion models for pricing dynamics~\cite{Barlow02, BBK08}, and we adopt their continuous framework for the demand process. Our abstraction over general probability distributions ensures robustness to modern forecasting methods, from density forecasting~\cite{HF09, HF16} to deep autoregressive networks~\cite{SFGJ20}.

\noindent \textbf{Operational constraints.} Battery operational constraints, including power ratings, energy capacity, and ramp rates, are well modeled as convex constraints or linear relaxations~\cite{MohsenianRad15, THMW20}, with established numerical frameworks for optimal scheduling under uncertainty~\cite{LM10, AB17}. Continuous-time extensions are developed in~\cite{GTL12, CFGZ19}. Since the control problem is well understood, we abstract away operational details to focus on economic efficiency.

\section{Model and Notations}
\label{sec:model}

\subsection{Market Model}

We model the electricity market over a normalized time period $T=[0,1]$, which represents one cycle of a periodic market. The demand $D=\{D(t)\}_{t\in[0,1]}$ is a stochastic process on $(\Xi,\mathcal{F},\mathbb{P})$ with filtration $\mathcal{F}=\{\mathcal{F}_t\}_{t\in[0,1]}$, normalized so that $D(t)\in[0,1]$ almost surely. We assume the demand is periodic with period $1$, so that the law of $D$ repeats across cycles. This is standard for daily electricity operations and lets us analyze a single representative cycle, and it motivates the cyclic battery constraint introduced below. The setting encompasses deterministic demand as a special case. See Example~\ref{example:demand} and Fig.~\ref{fig:demand_process}.

\begin{example}
\label{example:demand}
Consider the demand $D$ defined by the following stochastic process:
\begin{align*}
D(t)=m(t)+\sigma(t)\cdot Z(t),
\end{align*}
where the deterministic component $m(t)=0.7+0.2\sin(2\pi t)$ characterizes the baseline diurnal trend of the load. To capture the temporal correlation inherent in the power system, arising from physical inertia and gradual load changes, the noise term $Z(t)$ is modeled as a non-Markovian process derived from a Gaussian Process with a squared-exponential kernel. This specification ensures that the demand trajectories exhibit realistic smoothness (differentiability) rather than erratic, discontinuous jumps. The term $\sigma(t)$ represents time-varying volatility, strictly constrained to ensure the realized demand remains within physical bounds.
\end{example}

\begin{figure}[h]
    \centering
    \includegraphics[width=0.6\linewidth]{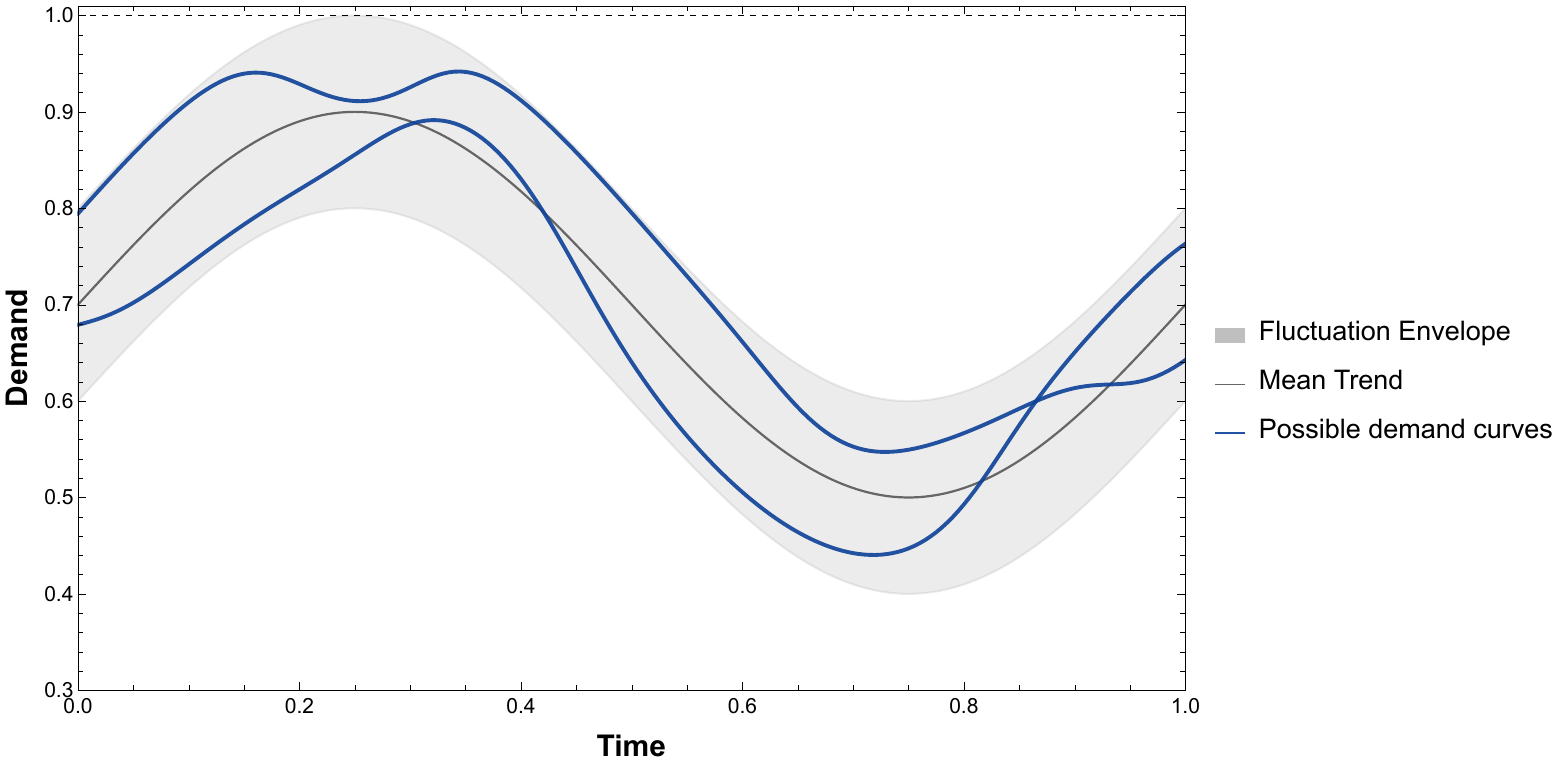}
    \caption{The figure shows the stochastic process defined in Example \ref{example:demand}. The gray dashed line depicts the deterministic mean trend. The solid blue curves show two realizations of the stochastic process, while the gray shaded region defines the fluctuation envelope.}
    \label{fig:demand_process}
\end{figure}

A battery operates via a rate schedule $B=\{B(t)\}_{t\in[0,1]}$, with $B>0$ for discharging and $B<0$ for charging. The policy must be \textit{non-anticipating}: $B(t)$ depends only on the demand history up to time $t$. Writing $B(t,\xi)$ and $D(t,\xi)$ for scenario $\xi\in\Xi$:
\begin{align}
\label{constraint:Non-anticipating}
B(t, \xi) = B(t, \xi'), \quad \forall \xi, \xi' \in \Xi \text{ such that } D(s, \xi) = D(s, \xi') \text{ for almost all } s \in [0, t].
\end{align}
The set of non-anticipating policies is convex. We suppress $\xi$ hereafter. The net demand $N(t)=D(t)-B(t)$ must remain in $[0,1]$:
\begin{align}
\label{constraint:Box}
\Pr_{\xi\sim\Xi}\left[0 \leq D(t,\xi)-B(t,\xi) \leq 1,\forall t\in[0,1]\right]=1.
\end{align}
The battery must also return to its initial charge level over the cycle:
\begin{align}
\label{constraint:Periodicity}
\Pr_{\xi\sim\Xi}\left[\int_0^1 B(t,\xi) \mathrm{d} t = 0\right]=1.
\end{align}
We denote by $\mathcal{B}$ the set of all policies satisfying \eqref{constraint:Non-anticipating}, \eqref{constraint:Box}, and \eqref{constraint:Periodicity}.

Additional technological limitations are modeled via a feasible set $\Omega$, with $\mathcal{B}\cap\Omega$ convex and containing the zero function. Examples include:

\begin{itemize}
    \item The power constraint: The power (charging/discharge rate) is bounded by $\gamma$, i.e., 
    \begin{align*}
        |B(t)| \leq \gamma\quad \forall t \in [0,1].
    \end{align*}

    \item The energy capacity constraint: The total energy stored or released between any two time points cannot exceed the physical capacity $c > 0$, i.e., 
    \begin{align*}
        \left|\int_{t_1}^{t_2} B(t) dt\right| \leq c\quad \forall t_1, t_2 \in [0,1].
    \end{align*}
    Notice that the above inequalities are equivalent to the existence of an initial charge that ensures the state-of-charge of the battery is between $[0,c]$.   
    \item The ramp rate constraint: The rate of change of the battery power is limited by $\delta$, i.e., 
    \begin{align*}
        |B'(t)| \leq \delta\quad \forall t \in [0,1].
    \end{align*}
\end{itemize}

All listed constraints are convex, and our framework accommodates any convex $\Omega$.

\subsection{Pricing, Welfare, and Efficiency Metrics}

With the physical model in place, we now introduce the economic side: how electricity is priced, how we measure the social benefit of battery operation, and how we quantify the efficiency loss when a battery maximizes profit rather than social welfare.

The price of electricity at time $t$ is determined by an increasing price function $P: [0,1] \to \mathbb{R}_{\geq 0}$, mapping the net demand $N(t)$ to a market-clearing price. The social generation cost is $G(N) = \int_{0}^{N} P(x) \,\mathrm{d} x$. This formulation assumes a Pay-as-Clear (or Uniform Pricing) mechanism, widely used in practice, under which all dispatched generators are compensated at the price set by the marginal unit. In a theoretically idealized market with a continuum of infinitesimal generators, the unique equilibrium has all generators bidding their true marginal costs, so $P(N)$ represents the system's marginal cost curve and $G(N)$ the total generation cost at demand level $N$.

Let $I = (D, P, \Omega)$ denote a specific market instance. Since demand is stochastic, we measure performance in expectation. The key welfare metric is the \textit{expected improvement on social cost} achieved by a battery policy $B$, which captures how much the battery reduces the total generation cost relative to the no-battery baseline:
\begin{align*}
    \mathrm{WEL}_I(B)\overset{\text{def}}{=}\mathbb{E}\left[ \int_{0}^{1} G(D(t))-G(D(t) - B(t)) \mathrm{d} t \right].
\end{align*}

The central tension in our model is between two operating regimes. A social planner would operate the battery to minimize generation cost, while a private operator seeks to maximize arbitrage profit. These objectives are generally misaligned, and we formalize both below.
\begin{description}
    \item[Centralized Battery] A system operator controls the battery to maximize the expected improvement in social cost:
    \begin{align}\label{obj:CB}
    \max_{B \in \mathcal{B}\cap\Omega} \mathrm{WEL}_I(B).
    \end{align}

    \item[Decentralized Battery] A profit-maximizing entity operates the battery to maximize its expected arbitrage revenue:
    \begin{align}\label{obj:DCB}
    \max_{B \in \mathcal{B}\cap\Omega} \mathrm{REV}_I(B)=\mathbb{E}\left[ \int_{0}^{1} B(t) \cdot P(D(t) - B(t)) \mathrm{d} t \right].
    \end{align}
    Here $B(t)\cdot P(D(t)-B(t))$ is the instantaneous profit from trading power at rate $B(t)$ at the endogenous price $P(D(t)-B(t))$, which the battery influences through its own action.
\end{description}

We assume throughout that the maxima in \eqref{obj:CB} and \eqref{obj:DCB} are attained, which holds in all instances we study, where the objectives are concave over the convex feasible set $\mathcal{B}\cap\Omega$.

\begin{remark}
    In the special case where the demand $D$ is deterministic, the expectation operator $\mathbb{E}[\cdot]$ can be removed. The adaptive nature of $B$ simplifies to a standard trajectory optimization problem where the entire demand curve $D_{[0,1]}$ is known a priori. In this setting, the problem reduces to the classical deterministic formulation where constraints and objectives are evaluated on a single realized path.
\end{remark}

Let $\mathcal{B}_{\text{CB}}(I)$ and $\mathcal{B}_{\text{DCB}}(I)$ be the sets of optimal solutions to \eqref{obj:CB} and \eqref{obj:DCB}, respectively. The \textit{Price of Anarchy} (PoA) measures how much welfare is lost due to selfish operation. It is defined as the ratio of the welfare gain under centralized control to that under profit maximization:

\begin{align*}
PoA(I) \overset{\text{def}}{=} 
\begin{cases}
    \max\frac{\mathrm{WEL}_I(B_{\text{CB}})}{\mathrm{WEL}_I(B_{\text{DCB}})}, &\text{if }\min \mathrm{WEL}_I(B_{\text{DCB}})>0 \\
    1, &\text{if }\min \mathrm{WEL}_I(B_{\text{DCB}})=0 \text{ and } \max \mathrm{WEL}_I(B_{\text{CB}})=0\\
    \infty, &\text{if }\min \mathrm{WEL}_I(B_{\text{DCB}})\leq 0
\end{cases}.
\end{align*}

The definition handles three cases for mathematical rigor. The first and most important case is when both batteries improve social cost, where the PoA captures the welfare gap. Since the zero function is always feasible, the centralized welfare is non-negative. We take the maximum over decentralized solutions to capture the worst-case welfare loss when there are ties. The second case ($PoA=1$) covers trivial instances where no battery can improve welfare. The third case ($PoA=\infty$) arises when the profit-maximizing battery actually harms social welfare.

For a class of instances $\mathcal{I}$, the Price of Anarchy is defined as $PoA(\mathcal{I}) \overset{\text{def}}{=} \sup_{I \in \mathcal{I}} PoA(I)$. We omit the dependence on $I$ or $\mathcal{I}$ unless it is needed for clarity.

Before proceeding, we provide Example \ref{exp:battery} and Fig. \ref{fig:battery} to clarify the concepts.
\begin{figure}[h]
    \centering
    \includegraphics[width=0.98\linewidth]{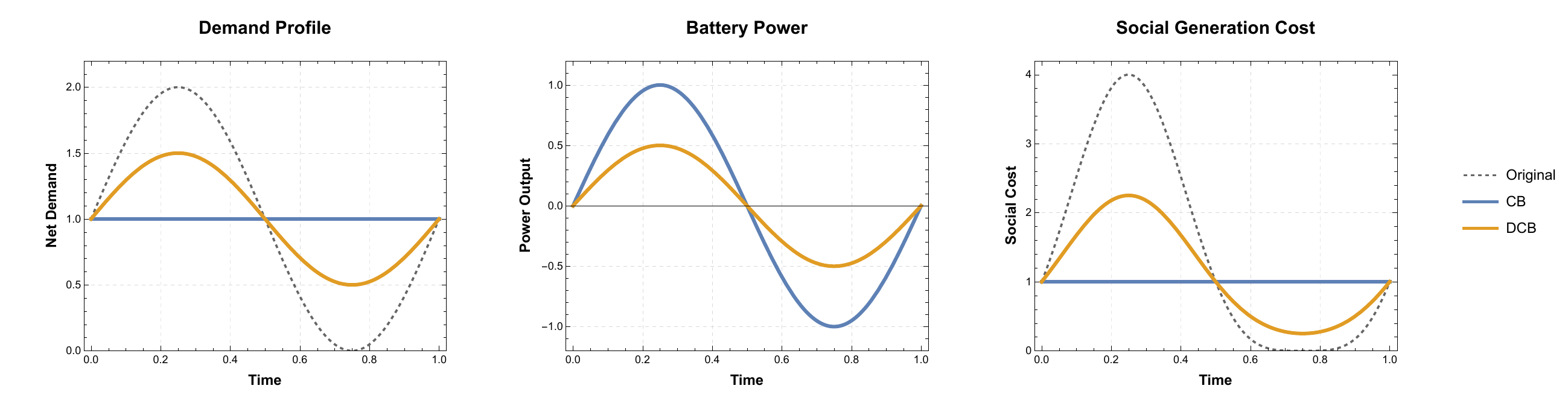}
    \caption{Illustration of Example \ref{exp:battery}. The left panel compares the original demand with the net demand adjusted by Centralized (CB) and Decentralized (DCB) batteries. The middle panel depicts the corresponding battery operation policies, while the right panel shows the instantaneous social generation costs before and after adjustment. Note that the revenue of the social welfare maximizing solution is zero, since the demand is completely smoothed.}
    \label{fig:battery}
\end{figure}
\begin{example}
    \label{exp:battery}
    Consider a deterministic demand $D(t)=1+\sin(2\pi t)$ and a linear price function $P(z)=2z$ giving a social generation cost function $G(z)=z^2$. Assuming there are no structural constraints, the centralized battery can minimize the social generation cost by setting $B_{CB}(t)=\sin(2\pi t)$, which improves the social generation cost by 
    \begin{align*}
        \mathrm{WEL}(B_{CB})=\int_0^1 (1+\sin(2\pi t))^2 \di t-\int_0^1 1^2\di t=1/2.
    \end{align*} 
    And the decentralized battery will set $B_{DCB}(t)=\sin(2\pi t)/2$ to maximize the revenue. The improvement of social cost is \begin{align*}
        \mathrm{WEL}(B_{DCB})=\int_0^1 (1+\sin(2\pi t))^2 \di t-\int_0^1 \left(1+\frac{\sin(2\pi t)}{2}\right)^2\di t=3/8.
    \end{align*}
    In this case, $PoA=\mathrm{WEL}(B_{CB})/\mathrm{WEL}(B_{DCB})=4/3$.
\end{example}

\section{Linear Price Functions}
\label{sec:linear}

We begin with linear price functions, a standard assumption in the literature. For this class, we establish a tight, universal bound on the Price of Anarchy that holds for any demand distribution and any convex constraint set.

\begin{theorem}
\label{thm:PoA-linear}
    Define the family of instances $\mathcal{I}_{\text{lin}}$ as the set of all market configurations $I = (D, P, \Omega)$ such that the price function $P(z)=a\cdot z+b$ where $a>0,b\geq 0$. We have $PoA(\mathcal{I}_{\text{lin}})= 4/3$.
\end{theorem}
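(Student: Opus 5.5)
We prove the upper bound $PoA\le 4/3$ via a Hilbert-space argument and get the lower bound from Example~\ref{exp:battery}. For the latter, $P(z)=2z$ is in $\mathcal{I}_{\text{lin}}$ with $\Omega$ unconstrained and deterministic demand. Since $D=1+\sin(2\pi t)$ ranges over $[0,2]$, we rescale demand to fit $[0,1]$, e.g.\ take $D=\tfrac12(1+\sin 2\pi t)$; the ratio is scale invariant. Both objectives are strictly concave in $B$ there, so the optima are unique and the ratio is exactly $4/3$.

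\textbf{Rewriting the objectives.} Let $\langle f,g\rangle=\mathbb{E}\int_0^1 f(t)g(t)\,\di t$ and $\|f\|^2=\langle f,f\rangle$. With $P(z)=az+b$ we have $G(z)=\tfrac a2 z^2+bz$, so
\begin{align*}
\mathrm{WEL}(B)=a\langle D,B\rangle-\tfrac a2\|B\|^2+b\,\mathbb{E}\!\int_0^1 B\,\di t,\qquad
\mathrm{REV}(B)=a\langle D,B\rangle-a\|B\|^2+b\,\mathbb{E}\!\int_0^1 B\,\di t .
\end{align*}
By periodicity \eqref{constraint:Periodicity}, the $b$-terms vanish on $\mathcal{B}$. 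Dividing by $a$, set $W(B)=\langle D,B\rangle-\tfrac12\|B\|^2$ and $R(B)=\langle D,B\rangle-\|B\|^2$. Both are strictly concave on the convex set $K=\mathcal{B}\cap\Omega$, so $B_{\text{CB}}$ and $B_{\text{DCB}}$ are unique.

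\textbf{Variational inequality for the revenue maximizer.} Let $B_r=B_{\text{DCB}}$. First-order optimality of $R$ over $K$ gives $\langle D-2B_r,\,X-B_r\rangle\le 0$ for all $X\in K$. Taking $X=0$ gives $\langle D,B_r\rangle\ge 2\|B_r\|^2$. Hence $W(B_r)\ge \tfrac32\|B_r\|^2\ge0$, and $W(B_r)=0$ only if $B_r=0$.

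\textbf{Bounding $W$ at the welfare maximizer.} Now take $X=B_c=B_{\text{CB}}$ in the same inequality:
\[
\langle D,B_c\rangle\le \langle D,B_r\rangle+2\langle B_r,B_c\rangle-2\|B_r\|^2 .
\]
Therefore
\[
W(B_c)\le \langle D,B_r\rangle-2\|B_r\|^2+2\langle B_r,B_c\rangle-\tfrac12\|B_c\|^2\le \langle D,B_r\rangle-2\|B_r\|^2+2\|B_r\|^2,
\]
where the last step maximizes over $B_c$ (at $B_c=2B_r$, i.e.\ completing the square). This yields $W(B_c)\le\langle D,B_r\rangle$.

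Write $u=\langle D,B_r\rangle$ and $v=\|B_r\|^2$, with $u\ge 2v$. Then
\[
\frac{W(B_c)}{W(B_r)}\le\frac{u}{u-v/2}\le\frac{2v}{2v-v/2}=\frac43 ,
\]
since the middle expression is decreasing in $u$.

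\textbf{Degenerate case.} If $B_r=0$, then $u=0$, so $W(B_c)\le 0$; hence $W(B_c)=0$ and $PoA=1$ by the second case of the definition.

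\textbf{Main obstacle.} The delicate step is justifying the first-order variational inequality in the infinite-dimensional, non-anticipating setting. This needs care, but it follows from concavity: for $X\in K$ and $\varepsilon\in(0,1]$, the point $B_r+\varepsilon(X-B_r)$ lies in $K$ by convexity. Optimality gives $R(B_r+\varepsilon(X-B_r))\le R(B_r)$; expanding this quadratic in $\varepsilon$, dividing by $\varepsilon$ and letting $\varepsilon\to0$ yields the inequality. Integrability holds because $B$ and $D$ are bounded, as $|B|\le1$ follows from \eqref{constraint:Box}.
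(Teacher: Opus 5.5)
Your proof is correct and follows essentially the same route as the paper: the variational inequality for the revenue maximizer, tested at $B=B_{\text{CB}}$ and at $B=0$, followed by completing the square. The paper packages the two substitutions into the single inequality $4\,W(B_{\text{DCB}})-3\,W(B_{\text{CB}})\ge \tfrac32\|2B_{\text{DCB}}-B_{\text{CB}}\|^2$ in your normalization, whereas you first derive $W(B_{\text{CB}})\le\langle D,B_{\text{DCB}}\rangle$ and then optimize the ratio over $u\ge 2v$. Your explicit treatment of tightness is also welcome: rescaling Example~\ref{exp:battery} puts $D(t)\in[0,1]$ and makes the box constraint hold, whereas the paper's example as written has $D$ ranging over $[0,2]$ and its proof leaves the lower bound implicit.
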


This result substantially generalizes the findings in \cite{ABBX25}, which derived the $4/3$ bound only for piecewise constant demand with unconstrained operations. Our proof reveals that this bound is a structural invariant of linear markets: it stems solely from the misalignment between the social cost and revenue objectives, and is unaffected by the stochasticity of demand or the geometry of the feasible set. We first define some notation.

\begin{definition}
    For any stochastic process $X$ and $Y$ on probability space $(\Xi,\mathcal{F},\mathbb{P})$, we define 
    \begin{align*}
        \langle X, Y\rangle \overset{\text{def}}{=} \int_{\xi\in\Xi}\left[\int_0^1 X(t,\xi)\cdot Y(t,\xi)\di t\right]\di \mathbb P(\xi)=\mathbb{E}\left[\int_0^1 X(t)\cdot Y(t)\di t\right].
    \end{align*}
    And we denote $\|X\|^2\overset{\text{def}}{=}\langle X,X\rangle$. 
\end{definition}

The operation $\langle\cdot, \cdot\rangle$ is a generalized inner product since taking expectation and integration are both linear operations.

\noindent \emph{Proof of Theorem \ref{thm:PoA-linear}.}
We consider an instance $I=(D,P,\Omega)\in\mathcal{I}_{lin}$ with $P(x)=a\cdot x+b$ where $a>0$. By direct computation,
\begin{align}
    \mathrm{WEL}(B)&=\mathbb{E}\left[ \int_{0}^{1} \frac{1}{2}a\cdot(D(t))^2+b\cdot(D(t))-(\frac{1}{2}a\cdot(D(t)-B(t))^2+b\cdot(D(t)-B(t))) \di t \right]\notag\\
    &=\frac{1}{2}a\cdot(2\cdot\langle D,B \rangle-\|B\|^2),\label{eq:W-linear}
\end{align}
where the last equality uses Constraint \eqref{constraint:Periodicity}. Similarly,
\begin{align}
    \mathrm{REV}(B)=\mathbb{E}\left[ \int_{0}^{1} B(t) \cdot P(D(t) - B(t)) \mathrm{d} t \right]=a\cdot(\langle D,B\rangle-\|B\|^2)+\underbrace{b\cdot\mathbb{E}\left[\int_0^1 D(t)\di t\right]}_{\text{Constant term}}.\label{eq:REV-linear}
\end{align}

Define $B_\text{CB}$ and $B_\text{DCB}$ as optimal solutions to Programming \eqref{obj:CB} and \eqref{obj:DCB}. We have
\begin{align*}
    B_{\text{CB}}&\in\argmax_{B\in\mathcal{B}\cap \Omega} 2\cdot\langle D,B \rangle-\|B\|^2,\\
    B_{\text{DCB}}&\in\argmax_{B\in\mathcal{B}\cap \Omega} \langle D,B \rangle-\|B\|^2.
\end{align*}

Since the objectives are concave and differentiable and the feasible set $\mathcal{B}\cap \Omega$ is convex, the first-order optimality conditions are given by variational inequalities.

\begin{lemma}[Optimality Conditions]
\label{lemma:variational_inequalities}
Let $f$ be a concave and Fr\'echet differentiable functional on a convex set $\mathcal{X}$. If $x^* \in \argmax_{x\in\mathcal{X}} f(x)$, then $\langle \nabla f(x^*), x - x^* \rangle \leq 0$ for all $x \in \mathcal{X}$.
\end{lemma}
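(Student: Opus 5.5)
The plan is the standard one-sided directional-derivative argument. Because $x^*$ maximizes $f$ over the convex set $\mathcal{X}$, moving from $x^*$ toward any other feasible point cannot increase $f$ to first order.

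Fix any $x\in\mathcal{X}$. By convexity of $\mathcal{X}$, the point $x_\lambda = x^* + \lambda(x-x^*)$ lies in $\mathcal{X}$ for every $\lambda\in(0,1]$. Optimality of $x^*$ gives $f(x_\lambda)\le f(x^*)$, so
\begin{align*}
\frac{f(x^*+\lambda(x-x^*))-f(x^*)}{\lambda}\le 0 \quad \text{for all } \lambda\in(0,1].
\end{align*}
Next I would invoke Fr\'echet differentiability at $x^*$:
\begin{align*}
f(x^*+h)=f(x^*)+\langle \nabla f(x^*),h\rangle+o(\|h\|).
\end{align*}
Here the gradient is the Riesz representative with respect to the inner product $\langle\cdot,\cdot\rangle$ defined above. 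Substituting $h=\lambda(x-x^*)$, dividing by $\lambda$, and letting $\lambda\to0^+$ turns the difference quotient into $\langle \nabla f(x^*),x-x^*\rangle$. The $o(\lambda\|x-x^*\|)/\lambda$ remainder vanishes because $\|x-x^*\|$ is a fixed finite number. Preserving the inequality in the limit gives $\langle \nabla f(x^*),x-x^*\rangle\le 0$.

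Concavity is not actually needed for this direction; it only matters for the converse, namely that the variational inequality is sufficient for optimality.

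The only point requiring care is that the gradient is representable via the generalized inner product on the relevant space, i.e. the square-integrable processes where $\|B\|<\infty$. For the functionals we apply this to, $2\langle D,B\rangle-\|B\|^2$ and $\langle D,B\rangle-\|B\|^2$, this can be checked directly. Expanding gives
\begin{align*}
f(B+h)-f(B)=\langle cD-2B,h\rangle-\|h\|^2
\end{align*}
with $c\in\{2,1\}$. So $\nabla f(B)=cD-2B$ and the remainder is exactly $-\|h\|^2=o(\|h\|)$. Every feasible policy has bounded $D$ and $B$ (net demand lies in $[0,1]$), hence finite norm, so all quantities are well defined.
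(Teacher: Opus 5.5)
Your argument is correct. The paper states this lemma without proof, treating it as the standard first-order condition, and your one-sided directional-derivative argument along $x^*+\lambda(x-x^*)$ is exactly that standard justification; you are also right that concavity is only needed for the converse. Your expansion $f(B+h)-f(B)=\langle cD-2B,h\rangle-\|h\|^2$ confirms what the paper uses implicitly, namely that $D-2B$ is the gradient with respect to its inner product $\langle\cdot,\cdot\rangle$, and the same one-line expansion gives the gradient $D-(n+1)B^*$ used in the multi-battery section.
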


Applying Lemma \ref{lemma:variational_inequalities} to $B_{\text{DCB}}$ with objective $F_1(B) = \langle D, B \rangle - \|B\|^2$ and gradient $\nabla F_1(B) = D - 2\cdot B$ gives, for any $B \in \mathcal{B}\cap\Omega$,
\begin{align}
\label{inq:opt-1-app}
\langle D - 2\cdot B_{\text{DCB}}, B - B_{\text{DCB}} \rangle \leq 0.
\end{align}

This implies, for any optimal solution $B_{CB}$ and $B_{DCB}$,
\begin{align*}
    &4\cdot(2\langle D, B_{\text{DCB}} \rangle - \|B_{\text{DCB}}\|^2) - 3\cdot(2\langle D, B_{\text{CB}} \rangle - \|B_{\text{CB}}\|^2) \\
    &\geq 2\cdot\langle D, B_{\text{DCB}}\rangle+6\cdot(2\cdot(\|B_{\text{DCB}}\|^2-\langle B_{\text{DCB}}, B_{\text{CB}}\rangle)) - 4\cdot\|B_{\text{DCB}}\|^2 +3\cdot\|B_{\text{CB}}\|^2\\
    &\geq 12\cdot\|B_{\text{DCB}}\|^2-12\cdot\langle B_{\text{DCB}}, B_{\text{CB}}\rangle+3\cdot\|B_{\text{CB}}\|^2
    =3\cdot\|2\cdot B_{\text{DCB}}-B_{\text{CB}}\|^2\geq 0.
\end{align*}

The first inequality holds since substituting $B = B_{\text{CB}}$ into \eqref{inq:opt-1-app} gives
$\langle D, B_{\text{DCB}}\rangle-\langle D, B_{\text{CB}}\rangle \geq 2(\|B_{\text{DCB}}\|^2-\langle B_{\text{DCB}}, B_{\text{CB}}\rangle)$.
The second inequality holds since substituting $B = 0$ gives
$\langle D, B_{\text{DCB}}\rangle\geq 2\|B_{\text{DCB}}\|^2$.

Recall $B=0\in\mathcal{B}\cap \Omega$, so $\langle D,B_{\text{CB}} \rangle-\|B_{\text{CB}}\|^2\geq0$. If both objectives are zero, then $PoA=1$. Otherwise $\langle D,B_{\text{DCB}} \rangle-\|B_{\text{DCB}}\|^2>0$, and the inequality gives
$$PoA=\max\frac{2\langle D,B_{\text{CB}} \rangle-\|B_{\text{CB}}\|^2}{2\langle D,B_{\text{DCB}} \rangle-\|B_{\text{DCB}}\|^2}\leq \frac{4}{3}.$$
\qed

\section{Non-Linear Price Functions}
\label{sec:nonlinear}

The preceding section established a tight PoA of $4/3$ for linear pricing. We now investigate what happens beyond the linear regime. Following \cite{ABBX25}, the results below are derived within a canonical benchmark: deterministic two-level demand with a discrete battery response and no structural constraints.

\begin{definition}
\label{def:step}
        Define the family of instances $\mathcal{I}_{\text{step}}$ as the set of all market configurations $I$ satisfying the following properties:
    \begin{itemize}
        \item[-] The deterministic demand $D$ is a 2-piecewise constant (step) function, i.e.,
        \begin{align*}
        D(t)=\begin{cases}
            D_1, &t\in[0,t_1]\\
            D_2, &t\in(t_1,1]
        \end{cases}.
        \end{align*}
        \item[-] The battery operation policy $B$ is also a 2-piecewise constant function following the demand, i.e.,
        \begin{align*}
        B(t)=\begin{cases}
            B_1, &t\in[0,t_1]\\
            B_2, &t\in(t_1,1]
        \end{cases}.
        \end{align*}
    \end{itemize}
\end{definition}

Restricting the battery to a 2-piecewise response is without loss for step demand: both the revenue and welfare integrands are pointwise functions of $B(t)$ on each constant-demand interval, coupled only through the periodicity constraint~\eqref{constraint:Periodicity}.

\subsection{General Convex Price Functions}

We first show that linearity is a necessary condition for bounded efficiency: even within $\mathcal{I}_{\text{step}}$, general convexity can lead to unbounded efficiency losses.

\begin{theorem}
    \label{thm:PoA-counterexample}
    There exists at least one instance $I\in\mathcal{I}_{\text{step}}$,
    with a convex price function such that $PoA(I)$ is unbounded.
\end{theorem}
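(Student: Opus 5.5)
We will construct an explicit parametric family of instances in $\mathcal{I}_{\text{step}}$ with a piecewise-linear convex price function, and show that the ratio $\mathrm{WEL}(B_{\text{CB}})/\mathrm{WEL}(B_{\text{DCB}})$ can be made arbitrarily large by tuning one parameter. If it turns out that $\mathrm{WEL}(B_{\text{DCB}})\le 0$ can be achieved directly, we instead get $PoA(I)=\infty$ for a single instance.

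\textbf{Reduction to two scalars.} In $\mathcal{I}_{\text{step}}$ with $D_1<D_2$, the periodicity constraint \eqref{constraint:Periodicity} forces $t_1B_1+(1-t_1)B_2=0$. Write $B_2=x\ge 0$ (discharge at the high-demand interval) and $B_1=-\frac{1-t_1}{t_1}x$. Both objectives then become explicit one-variable functions:
\begin{align*}
\mathrm{REV}(x)&=(1-t_1)\,x\,\bigl[P(D_2-x)-P(D_1+\tfrac{1-t_1}{t_1}x)\bigr],\\
\mathrm{WEL}(x)&=(1-t_1)\bigl[G(D_2)-G(D_2-x)\bigr]-t_1\bigl[G(D_1+\tfrac{1-t_1}{t_1}x)-G(D_1)\bigr].
\end{align*}
Because $G$ is convex, $\mathrm{WEL}$ is concave in $x$. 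Its maximizer $x^\star$ equalizes the net-demand prices. The revenue maximizer solves the marginal-revenue condition, and if $P$ has kinks we treat it by comparing the local maxima on each linear piece.

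\textbf{Construction.} We take a convex price with a single kink,
\[
P(z)=\varepsilon z \ \text{ for } z\le c,\qquad P(z)=\varepsilon c+M(z-c)\ \text{ for } z>c,
\]
with $M\gg\varepsilon$. The peak demand sits just above the kink, $D_2=c+\delta$.

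The intuition is that the profit maximizer withholds. Inside the steep segment $x\in[0,\delta]$, revenue behaves like $x(M\delta-Mx)$, which peaks at $x=\delta/2$. The planner, however, wants to keep discharging past the kink into the long flat segment, where each extra unit still lowers cost.

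Using unequal durations, with a short peak and a long off-peak so that charging barely moves the off-peak price, we will choose $(t_1,\varepsilon,c,\delta,M)$ to make two things happen at once. First, the global revenue maximizer is the local one at $x\approx\delta/2$, because beyond the kink the price spread collapses to order $\varepsilon$ and revenue is dominated. Second, the planner's extra welfare from moving through the flat region (of order $\varepsilon c\cdot$length $\cdot(1-t_1)$, minus the small off-peak cost) dominates the $\Theta(M\delta^2)$ welfare captured by the decentralized battery. We then compute $\mathrm{WEL}(x^\star)$ and $\mathrm{WEL}(\delta/2)$ in closed form and send a parameter to a limit, for instance $\delta\to0$ with $M\delta$ fixed, so that the ratio diverges.

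\textbf{Main obstacle.} The delicate step is guaranteeing that the revenue maximizer really stays at the small interior optimum. The danger is that it does not instead jump to a large $x$ on the flat piece, since for a purely linear piece we already know from Theorem~\ref{thm:PoA-linear} that the ratio would then be at most $4/3$. We need the flat-region revenue to be strictly smaller than the kink revenue $\approx M\delta^2/4$ while the flat-region welfare is much larger.

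The lever for this is the asymmetric durations together with a non-trivial intercept. A large constant price level $\varepsilon c$ contributes nothing to either the spread or the revenue, while $G$ still accumulates cost. I would first try to tune $t_1$ and $D_1$ so that off-peak charging raises the off-peak price enough to kill the flat-region spread for the revenue maximizer. Since $D_1+\frac{1-t_1}{t_1}x$ grows faster than $D_2-x$ falls, this cancels the margin but leaves a net welfare gain to the planner.

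If this balancing fails, the fallback is to add a second kink at the off-peak level, yielding a three-piece convex $P$. The same inequality bookkeeping then applies, with each case checked piecewise.
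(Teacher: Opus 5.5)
\textbf{There is a genuine gap: no convex price with a fixed number of linear pieces can make the PoA unbounded, so neither your one-kink price nor the three-piece fallback can succeed.}

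In your own reduction, differentiating gives $\frac{\di}{\di x}\mathrm{WEL}(x)=(1-t_1)\,p(x)$ with $p(s)=P(D_2-s)-P\bigl(D_1+\tfrac{1-t_1}{t_1}s\bigr)$. Hence $\mathrm{WEL}(x)=(1-t_1)\int_0^x p(s)\,\di s$ and $\mathrm{REV}(x)=(1-t_1)\,x\,p(x)$. Both objectives depend only on the non-increasing price spread $p$, because periodicity cancels the price level. So the lever you describe does not exist: a large intercept $\varepsilon c$ cannot be invisible to revenue while $G$ ``still accumulates cost''.

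The problem is therefore a monopolist facing inverse demand $p$, and two facts bound it. First, the seller's choice $x_m$ satisfies $\mathrm{WEL}(x_m)\ge(1-t_1)\,x_m p(x_m)=\mathrm{REV}^\ast$, because $p$ is non-increasing. Second, let $p(L)=0$ mark the planner's optimum, and suppose $p$ is linear on $[a,b]\subseteq[0,L]$. Then $\int_a^b p\le 2\max_s s\,p(s)$. If $a\ge b/2$, use $(b-a)p(a)\le a\,p(a)$. Otherwise, extend the line to $[0,b]$, where it stays nonnegative, and use $\int_0^b=b\,p(b/2)$ with $b/2\in[a,b]$.

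A convex $P$ with $k$ linear pieces makes $p$ piecewise linear with at most $2k-1$ pieces. Hence $PoA\le 2(2k-1)$, i.e.\ at most $6$ for your price and $10$ for the fallback, for every choice of $t_1,\varepsilon,c,\delta,M$.

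You can also see the failure concretely. On the flat piece the seller can earn at least half of the welfare the planner collects there. Keeping the seller at $x\approx\delta/2$ therefore requires the kink revenue, about $M\delta^2/4$, to beat that amount, which caps the ratio. Your limit $\delta\to0$ with $M\delta$ fixed sends $M\delta^2/4\to0$. This pushes the seller onto the flat piece, where the spread is linear and the ratio drops back toward $4/3$. Forcing $\mathrm{WEL}(B_{\text{DCB}})\le0$ is impossible too, since $\mathrm{WEL}(B_{\text{DCB}})\ge\mathrm{REV}^\ast>0$ whenever the planner gains anything.

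The missing idea is that the spread must track an equal-revenue curve $p(s)\approx R/s$ over a range of scales whose ratio diverges. Then $x\,p(x)$ is nearly constant, a slight tilt makes the seller stop at the smallest scale, and $\int p$ grows like the logarithm of the scale ratio. By the bound above, each linear piece of $p$ adds at most $2\,\mathrm{REV}^\ast$ to the planner's welfare. You therefore need a genuinely curved price, or an unboundedly growing number of kinks.

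This is exactly the paper's construction. It takes $D=1$ on $[0,\tfrac12]$ and $D=0$ on $(\tfrac12,1]$. The price is $P=2$ on $[0,\tfrac12]$, $P(z)=1/(1-z)$ on $(\tfrac12,1-\delta]$, and a steep linear cap on $(1-\delta,1]$. The spread is then $1/x-2$ on $[\delta,\tfrac12]$, so revenue there is the almost flat $1-2x$. This is beaten by the cap's optimum $(1-\delta)^2$ at $x=\delta-\delta^2$. Meanwhile the planner's gain grows like $\ln(1/\delta)$, giving $PoA\ge(\tfrac12-\ln 2\delta)/(\tfrac32-2\delta)\to\infty$.
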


We begin with a structural claim about centralized optima that will be used throughout the remaining analysis. The proof is in Appendix \ref{appendix:proof-clm:cb-average}.

\begin{claim}
    \label{clm:cb-average}
     For an instance $I$ with deterministic demand and no structural constraints, the improvement of social cost is maximized when the battery balances the net demand to the average demand over $t\in[0,1]$, i.e., the maximum improvement is $\mathrm{WEL}(B_{CB})=\int_0^1 G(D)-G(\widebar{D})$, where $\widebar{D}=\int_0^1 D(t)\di t$ is the average demand and $B_{CB}(t)=D(t)-\widebar{D}$ is the optimal solution achieves this improvement.
\end{claim}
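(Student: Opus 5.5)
The proof reduces to Jensen's inequality. Since $P$ is nonnegative and increasing, $G(N)=\int_0^N P(x)\,\di x$ is convex on $[0,1]$: its derivative $P$ is nondecreasing.

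Take any feasible $B\in\mathcal{B}$; there are no structural constraints, so $\Omega$ imposes nothing. By the box constraint \eqref{constraint:Box}, the net demand $N(t)=D(t)-B(t)$ lies in $[0,1]$ for every $t$. By the periodicity constraint \eqref{constraint:Periodicity},
\[
\int_0^1 N(t)\,\di t=\int_0^1 D(t)\,\di t-\int_0^1 B(t)\,\di t=\widebar{D}.
\]
Viewing $t$ as uniformly distributed on $[0,1]$ and applying Jensen's inequality to the convex function $G$ gives
\[
\int_0^1 G(N(t))\,\di t\;\geq\; G\!\left(\int_0^1 N(t)\,\di t\right)=G(\widebar{D}).
\]
Hence
\[
\mathrm{WEL}(B)=\int_0^1 G(D(t))\,\di t-\int_0^1 G(N(t))\,\di t\;\leq\;\int_0^1 G(D(t))\,\di t-G(\widebar{D}),
\]
which is exactly the claimed upper bound.

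It remains to check that $B_{CB}(t)=D(t)-\widebar{D}$ is feasible and attains the bound.

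\emph{Non-anticipation.} The demand is deterministic, so \eqref{constraint:Non-anticipating} is vacuous.

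\emph{Box constraint.} The net demand is $N(t)=\widebar{D}$, and $\widebar{D}\in[0,1]$ because it is the average of values in $[0,1]$.

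\emph{Periodicity.} We have $\int_0^1 B_{CB}(t)\,\di t=\widebar{D}-\widebar{D}=0$.

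\emph{Attainment.} With $N(t)\equiv\widebar{D}$, we get $\int_0^1 G(N(t))\,\di t=G(\widebar{D})$, so equality holds in the bound.

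Within $\mathcal{I}_{\text{step}}$, $B_{CB}$ is itself 2-piecewise constant, so it is also admissible under the restricted policy class of Definition~\ref{def:step}.

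No step here is a real obstacle. The only point needing care is justifying the use of Jensen's inequality. One option is the pointwise tangent-line inequality $G(N(t))\geq G(\widebar{D})+P(\widebar{D})\,(N(t)-\widebar{D})$, which holds because $P=G'$ and $G$ is convex, followed by integrating over $t$ and using $\int_0^1 (N(t)-\widebar{D})\,\di t=0$. This avoids any regularity assumptions beyond integrability of $B$, which the problem already presupposes.
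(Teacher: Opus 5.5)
Your proof is correct and follows the same route as the paper: the periodicity constraint fixes the time-average of the net demand at $\widebar{D}$, Jensen's inequality for the convex $G$ gives the upper bound, and $B_{CB}=D-\widebar{D}$ attains it. You are more explicit than the paper in checking that $B_{CB}$ is feasible (box constraint via $\widebar{D}\in[0,1]$, and periodicity), and your tangent-line justification of Jensen is valid for any nondecreasing $P$, since $P(\widebar{D})$ is a subgradient of $G$ at $\widebar{D}$.
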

\noindent \emph{Proof sketch of Theorem \ref{thm:PoA-counterexample}.} Consider the demand $D(t)=1$ for $t\in[0,1/2]$ and $D(t)=0$ for $t\in(1/2,1]$, with the convex price function
\begin{align*}
        P(z) =
        \begin{cases}
        2, &z \in [0, \frac{1}{2}] \\
        \frac{1}{1-z}, &z \in (\frac{1}{2}, 1 - \delta] \\
        \frac{1}{\delta^2}(z - 1 + \delta) + \frac{1}{\delta}, &z \in (1 - \delta,1]
        \end{cases},
\end{align*}
where $\delta < 1/2$ is small. By Claim \ref{clm:cb-average},
\begin{align*}
    \mathrm{WEL}(B_{\text{CB}}) = \tfrac{1}{2}\bigl(G(0) + G(1)\bigr) - G(\tfrac{1}{2}).
\end{align*}
One can verify that the profit-maximizing battery only shifts demand by $\delta - \delta^2$ in each period (instead of the socially optimal $1/2$). Computing the resulting PoA yields
\begin{align*}
    PoA \;\geq\; \frac{\frac{1}{2} - \ln 2\,\delta}{\frac{3}{2} - 2\delta},
\end{align*}
which diverges as $\delta \to 0^+$. The full proof is in Appendix \ref{appendix:proof-thm-counterexample}. \qed

Restricting convex functions to monomials yields a constant upper bound (Theorem~\ref{thm:PoA-mono-ub}), but we now show that relaxing to general convex polynomials does not suffice. The motivation for this construction stems from Bernstein polynomials in approximation theory~\cite{bernstein1912demonstration}. Their favorable properties for approximating convex functions, specifically uniform convergence and approaching the limit from above monotonically~\cite{aldaz2009shape}, allow us to approximate the counterexample above by convex polynomials while preserving the unboundedness of the PoA.

\begin{cor}
\label{cor:poly-lb}
    There exists at least one instance $I$, where the demand is
    \begin{align*}
        D(t)=\begin{cases}
            1, &t\in[0,\frac{1}{2}]\\
            0, &t\in(\frac{1}{2},1]
        \end{cases},
    \end{align*}
    the price function is convex polynomial, and there are no extra structural constraints, such that $PoA(I)$ is unbounded.
\end{cor}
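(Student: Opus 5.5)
The plan is to approximate the piecewise convex price $P_\delta$ from the proof of Theorem~\ref{thm:PoA-counterexample} by its Bernstein polynomials
\begin{align*}
\mathcal{B}_n P_\delta(z)=\sum_{k=0}^n P_\delta(k/n)\binom{n}{k}z^k(1-z)^{n-k}.
\end{align*}
For each $\delta$ we will choose $n=n(\delta)$ large enough that the PoA of the resulting instance is still as large as we like.

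Three standard facts about Bernstein polynomials of a continuous convex function on $[0,1]$ will be used.
\begin{itemize}
\item[(i)] $\mathcal{B}_n P_\delta$ is a convex polynomial. If $P_\delta$ is increasing, so is $\mathcal{B}_n P_\delta$.
\item[(ii)] $\mathcal{B}_n P_\delta\ge P_\delta$, and the approximations decrease monotonically in $n$ to $P_\delta$.
\item[(iii)] $\mathcal{B}_n P_\delta\to P_\delta$ uniformly on $[0,1]$, as cited from~\cite{aldaz2009shape}.
\end{itemize}
Since $P_\delta$ is constant and then increasing, (i) gives an increasing convex polynomial price, so the instance is admissible.

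\textbf{Step 1 (centralized welfare).} By Claim~\ref{clm:cb-average}, for any price $Q$ with cost $G_Q$,
\begin{align*}
\mathrm{WEL}_Q(B_{CB})=\tfrac12\bigl(G_Q(0)+G_Q(1)\bigr)-G_Q(\tfrac12).
\end{align*}
This is a continuous functional of $Q$ in the sup norm, since $|G_Q-G_{Q'}|\le\|Q-Q'\|_\infty$. Hence it converges as $n\to\infty$ to the value for $P_\delta$, which is of order $\ln(1/\delta)$ and diverges as $\delta\to0$.

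\textbf{Step 2 (decentralized welfare).} In the step instance, periodicity forces $B_1=x$ and $B_2=-x$. Revenue and welfare therefore become one-dimensional functions of $x\in[0,1/2]$:
\begin{align*}
R_Q(x)&=\tfrac{x}{2}\bigl(Q(1-x)-Q(x)\bigr),\\
W_Q(x)&=\tfrac12\bigl(G_Q(1)-G_Q(1-x)+G_Q(0)-G_Q(x)\bigr).
\end{align*}
Both are continuous in $Q$, uniformly in $x$. I would show that, for $n$ large, every maximizer of $R_{\mathcal{B}_nP_\delta}$ lies within a small neighborhood of the maximizer set of $R_{P_\delta}$. This is the standard argmax continuity under uniform convergence on a compact interval. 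It then follows that $W$ at any decentralized optimum is at most $W_{P_\delta}(x^*_\delta)+\epsilon$, where $x^*_\delta\approx\delta$ is the shift computed in Theorem~\ref{thm:PoA-counterexample}. That quantity is bounded, roughly $\tfrac32$ minus small terms. We also need decentralized welfare to stay positive, which holds since $x>0$ is optimal and $W_Q(x)>0$ for small $x>0$ whenever $Q$ is increasing. The PoA of the polynomial instance is thus at least $(\tfrac12\ln(1/\delta)-O(1))/(\tfrac32+\epsilon)$, and letting $\delta\to0$ gives unboundedness.

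\textbf{Main obstacle.} The hard part is controlling \emph{all} maximizers in Step 2, because the PoA definition takes the worst tie. Argmax stability needs either uniqueness of the maximizer for $P_\delta$ or a uniform gap away from it. I would first try to verify that $R_{P_\delta}$ has a unique maximizer, using the explicit piecewise form near $x\approx\delta$. If uniqueness fails, I would instead bound $\max_x W$ over the whole near-optimal set $\{x: R_{P_\delta}(x)\ge\max R_{P_\delta}-\eta\}$ and show this set lies in $[0,C\delta]$. That suffices, since $W$ is bounded there by a constant.
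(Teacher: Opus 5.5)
Your proposal is correct and follows essentially the paper's route: Bernstein approximation from above, continuity of the centralized welfare $\tfrac12(G(0)+G(1))-G(\tfrac12)$, and localizing the profit maximizer near $\delta$. Your fallback (the near-optimal set lies in $[0,C\delta]$) is exactly the paper's argument with $C=2$ and approximation error at most $\delta$, and the uniqueness you hoped for does hold, since $(1-\delta)^2>1-2\delta$ makes $x=\delta-\delta^2$ the unique maximizer of $R_{P_\delta}$. Note also that the worst-tie convention works in your favor for a lower bound: any single decentralized optimum suffices, and nonpositive decentralized welfare gives $PoA=\infty$, so the positivity check is unnecessary.
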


See Appendix \ref{appendix:proof-cor:poly-lb} for the formal proof of Corollary \ref{cor:poly-lb}.

\subsection{Monomial Price Functions}

To bridge the gap between the well-behaved linear regime and the unbounded general convex regime, we analyze monomial price functions $P(z)=\alpha z^d$, where the degree $d$ directly controls the curvature. Working within $\mathcal{I}_{\text{step}}$ (Definition~\ref{def:step}), we derive upper and lower bounds on the Price of Anarchy as a function of the monomial degree.

\subsubsection{Upper Bounds}
\label{sec:ub-mono}

We first show that for any monomial degree, the efficiency loss remains bounded by a universal constant.

\begin{theorem}
    \label{thm:PoA-mono-ub}
    Define the family of instances $\mathcal{I}_{\text{mono, step}}\subset\mathcal{I}_{\text{step}}$ where all instance has a monomial price function $P(z) = \alpha z^d$ where $\alpha > 0$ and without any extra structural constraints.
    We have $PoA(\mathcal{I}_{\text{mono, step}}) \leq 2$.
\end{theorem}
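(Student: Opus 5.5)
We reduce the instance to a one-parameter problem. Write $t_1=\tau$, and take $D_1>D_2$ without loss of generality. Periodicity forces $\tau B_1+(1-\tau)B_2=0$, so we parametrize by the energy shifted, $x=\tau B_1=-(1-\tau)B_2\ge 0$. The constant $\alpha$ cancels in the ratio, so we set $\alpha=1$ and work with $G(z)=z^{d+1}/(d+1)$.

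By Claim~\ref{clm:cb-average}, $B_{CB}$ levels the net demand at $\widebar D=\tau D_1+(1-\tau)D_2$. This corresponds to $x^*=\tau(1-\tau)(D_1-D_2)$, and
\[
\mathrm{WEL}(B_{CB})=\tau G(D_1)+(1-\tau)G(D_2)-G(\widebar D).
\]
The welfare as a function of $x$ is
\[
W(x)=\tau\bigl[G(D_1)-G(D_1-x/\tau)\bigr]+(1-\tau)\bigl[G(D_2)-G(D_2+x/(1-\tau))\bigr],
\]
which is concave in $x$ and maximized at $x^*$. The revenue is
\[
R(x)=x\bigl[P(D_1-x/\tau)-P(D_2+x/(1-\tau))\bigr],
\]
whose optimizer $\hat x\in(0,x^*)$ satisfies the first-order condition
\[
P(N_1)-P(N_2)=\frac{x}{\tau}P'(N_1)+\frac{x}{1-\tau}P'(N_2).
\]
Here $N_1=D_1-\hat x/\tau$ and $N_2=D_2+\hat x/(1-\tau)$, with $N_1\ge N_2$.

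The key step is to lower bound $W(\hat x)$ against $W(x^*)$. Concavity of $W$ with $W(0)=0$ gives only $W(\hat x)\ge (\hat x/x^*)W(x^*)$, so it suffices to show $\hat x\ge x^*/2$. In the linear case this is exactly the known relation $B_{DCB}=B_{CB}/2$. Write $u=\hat x/\tau$ and $v=\hat x/(1-\tau)$. The gap that remains at $\hat x$ is $N_1-N_2=(D_1-D_2)-u-v$, and $x^*/2\le \hat x$ is equivalent to $N_1-N_2\le u+v$.

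To prove this from the first-order condition, use convexity of $P$ (valid since $d\ge1$) in two ways. First, the mean value theorem gives $P(N_1)-P(N_2)\ge P'(N_2)(N_1-N_2)$. Second, $P'(N_1)\le P'(N_2)+P''\cdot(N_1-N_2)$, which appears unhelpful. The cleaner route is the inequality $P(N_1)-P(N_2)\le P'(N_1)(N_1-N_2)$, which follows from convexity. Combined with the first-order condition, it gives
\[
uP'(N_1)+vP'(N_2)\le P'(N_1)(N_1-N_2).
\]
This points the wrong way. What we need is the reverse, $N_1-N_2\le u+v$, so a finer argument is required.

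For monomials, we would use homogeneity. Set $r=N_2/N_1\in[0,1]$. Then $P(N_1)-P(N_2)=N_1^d(1-r^d)$ and $P'(N_i)=dN_i^{d-1}$. The first-order condition becomes
\[
N_1(1-r^d)=d\bigl(u+v\,r^{d-1}\bigr).
\]
We must show $N_1(1-r)\le u+v$. Since $v\ge v\,r^{d-1}$, it suffices to show $N_1(1-r)\le N_1(1-r^d)/d$, that is, $d(1-r)\le 1-r^d$. This inequality is false for $d>1$, because $(1-r^d)/(1-r)=\sum_{k<d}r^k\le d$. So the crude bound fails, and the weights $\tau$ and $1-\tau$ must be used.

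This is the main obstacle. Our plan at this point is to abandon the route through $\hat x\ge x^*/2$ and instead directly bound $W(x^*)-W(\hat x)\le W(\hat x)$. Let $L$ be the line through $(N_2,P(N_2))$ and $(N_1,P(N_1))$. By concavity of $W$ and the first-order condition, the welfare still to be gained beyond $\hat x$ is at most
\[
\int_{N_2}^{N_1}\bigl[P(z)-P(N_2)\bigr]\,\mathrm{d}z
\]
(suitably weighted). Meanwhile, $W(\hat x)$ is at least the area $\hat x\,(P(N_1)-P(N_2))$ realized so far. Using the first-order condition to express $\hat x(P(N_1)-P(N_2))$ through the derivatives, together with monomial homogeneity, we reduce the problem to a two-variable inequality in $(r,\tau)$. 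We would then verify that inequality by calculus, handling the worst case $\tau\to$ extreme and $d\to\infty$. Numerically, that worst case should approach a limit of at most $2$.
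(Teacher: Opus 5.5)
Your proposal does not prove the statement. The first route is abandoned, and the second is only a plan: its decisive inequality is never written down, let alone verified, and ``numerically \ldots\ at most $2$'' is not an argument. The first route is also not merely hard to finish; its target is false. In the paper's parameterization the ratio $\hat x/x^*$ is exactly the scalar $k^\star$, and for $d=2$ one has $k^\star\to 1/3$ as $x\to 0$. Likewise, take $D=1$ on $[0,\varepsilon]$, $D=0$ on $(\varepsilon,1]$ and $P(z)=(d+1)z^d$: the revenue maximizer shifts at most $1/(d+1)$ of the peak, while the planner shifts $1-\varepsilon$. So $\hat x\ge x^*/2$ fails for every $d\ge 2$, and no finer use of the weights $\tau,1-\tau$ can rescue it. The secant bound $W(\hat x)\ge(\hat x/x^*)W(x^*)$ therefore cannot give anything better than roughly $d+1$. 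The bound $2$ survives a small $\hat x/x^*$ only because of the curvature of $W$, which the secant bound discards: $W$ is flat at $x^*$, and for steep monomials most of the gain comes from the first units shaved off the peak.

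Your second route would give the theorem if made sharp. Since $W(\hat x)\ge \hat x\,(P(N_1)-P(N_2))$ for any increasing $P$, it suffices to show that the remaining Jensen gap $\tau G(N_1)+(1-\tau)G(N_2)-G(\widebar D)$ is at most $\hat x\,(P(N_1)-P(N_2))$, with $\hat x$ eliminated via your first-order condition. The lossy bounds you sketch do not achieve this. The natural weighted bound $\tau\int_{N_2}^{N_1}[P(z)-P(N_2)]\,\mathrm{d}z$ already exceeds the revenue for linear $P$ once $\tau>1/2$. Replacing the integral by the triangle under the chord $L$ leads, as $\tau\to 0$ and $N_2\to 0$ (exactly the extremal regime above), to the requirement $1/2\le 1/d$, which is false for $d\ge 3$. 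You would therefore have to prove the exact inequality in $(r,\tau)$ uniformly in $d$; that is the whole content of the theorem, and it is missing.

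The paper avoids this by working with the exact ratio $\varphi_{x,\varepsilon}(k)=\mathrm{WEL}(1)/\mathrm{WEL}(k)$. It shows $\varphi$ is decreasing in $k$, and lower-bounds the revenue maximizer by an $\varepsilon$-free $\underline{k}$ obtained from a relaxed first-order condition. It then shows $\varphi$ is increasing in $\varepsilon$, so the worst case has zero off-peak demand. It finishes with the explicit one-variable inequality $\frac{1-x^{d+1}}{1-(1-r)^{d+1}}\le 2$ for $r=\underline{k}(1-x)\in(0,1/(d+1))$.
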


Stricter bounds can be derived for specific degrees. For $d=2$, we obtain a tight characterization, and the worst-case construction provides a template for lower bounds at arbitrary degree~$d$.

\begin{theorem}
    \label{thm:PoA-mono-2-ub}
    Let $\mathcal{I}_{2\text{, step}}$ be the subset of $\mathcal{I}_{\text{mono,step}}$ where the price function is monomial of degree $2$, we have $PoA(\mathcal{I}_{2\text{, step}}) = 27/19$.
\end{theorem}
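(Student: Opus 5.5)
We prove the two directions separately: a limiting family of instances gives the lower bound $27/19$, and an explicit two-parameter reduction gives the matching upper bound. By the scaling $D\mapsto\lambda D$, under which both $\mathrm{WEL}$ and $\mathrm{REV}$ scale by $\lambda^3$, and by factoring out $\alpha$, we may take $P(z)=z^2$, so that $G(z)=z^3/3$. By relabeling the two periods we may take $D_1=1\ge D_2=s\in[0,1]$, and we write $\tau=t_1$. Periodicity \eqref{constraint:Periodicity} forces $B_2=-rB_1$ with $r=\tau/(1-\tau)$. Writing $x=B_1\ge 0$ for the discharge on the peak, the decentralized revenue becomes
\begin{align*}
\mathrm{REV}(x)=\tau\bigl[x(1-x)^2-x(s+rx)^2\bigr].
\end{align*}
By Claim~\ref{clm:cb-average}, the centralized welfare is $\mathrm{WEL}(B_{CB})=\tau G(1)+(1-\tau)G(s)-G(\widebar D)$ with $\widebar D=\tau+(1-\tau)s$. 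We check that the revenue optimizer satisfies $0\le x\le 1-\widebar D$, so the net demands stay in $[s,1]$ and constraint \eqref{constraint:Box} is inactive. Moreover $\mathrm{REV}$ is a cubic in $x$ that is concave on the relevant range, so $x^*$ is the unique root of
\begin{align*}
(1-x)^2-2x(1-x)-(s+rx)^2-2rx(s+rx)=0
\end{align*}
in that interval.

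\textbf{Lower bound.} We take $s=0$ and let $\tau\to 0$, so $r\to 0$. The FOC then becomes $(1-x)(1-3x)=O(r^2)$, giving $x^*\to 1/3$. The peak net demand is $2/3$ and the off-peak shift $rx^*$ is negligible. Hence
\begin{align*}
\mathrm{WEL}(B_{DCB})&=\tfrac{\tau}{3}\bigl(1-\tfrac{8}{27}\bigr)+o(\tau)=\tfrac{19}{81}\tau+o(\tau),\\
\mathrm{WEL}(B_{CB})&=\tfrac{\tau}{3}-\tfrac{\tau^3}{3}=\tfrac{\tau}{3}+o(\tau).
\end{align*}
The ratio of the two tends to $27/19$, so $PoA(\mathcal{I}_{2,\text{step}})\ge 27/19$ as a supremum.

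\textbf{Upper bound.} We must show $27\,\mathrm{WEL}(B_{DCB})\ge 19\,\mathrm{WEL}(B_{CB})$ for all $(s,\tau)\in[0,1]\times(0,1)$. Writing $\mathrm{WEL}(B_{DCB})=\tau[G(1)-G(1-x^*)]+(1-\tau)[G(s)-G(s+rx^*)]$ turns this into an inequality in the three quantities $s,\tau,x^*$, linked by the quadratic FOC above.

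The plan is to use the FOC to eliminate $s$, which enters the FOC only quadratically. Concretely, we solve for $s+rx$ in terms of $x$ and $r$ and reparametrize everything by $(x,r)$. The target then becomes a polynomial inequality in two variables, possibly after clearing a square root, which we would verify by factoring or a sum-of-squares certificate. Failing that, we would show monotonicity in $s$: the ratio should be largest at $s=0$, mirroring how the lower-bound family was found. This reduces the problem to the one-variable function $\tau\mapsto\mathrm{PoA}(0,\tau)$, which we would show is decreasing in $\tau$, so that its supremum is the limit $27/19$ at $\tau\to 0$.

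We expect this step, proving the two-variable inequality uniformly, and in particular the monotonicity-in-$s$ reduction, to be the main obstacle. The optimizer $x^*$ is only implicitly defined, so the delicate part is getting a clean sign argument, for instance via the derivative of the ratio along the FOC curve computed by implicit differentiation.
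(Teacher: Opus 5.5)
\textbf{Verdict: the upper bound is missing.} Your lower-bound family ($s=0$, $\tau\to0$, so $x^*\to1/3$ and the ratio tends to $(1/3)/(19/81)=27/19$) is correct. It is exactly the extremal instance the paper identifies: the $d=2$ case of its lower-bound construction, i.e.\ $x\to0$, $\varepsilon\to0$ in its parametrization.

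The gap is the upper bound, which is the substance of the theorem, and you do not prove it. Your primary route (eliminate $s$ through the FOC, then find a factorization or sum-of-squares certificate for a two-variable inequality with a square root) is never carried out. Your fallback rests on a false claim: the ratio is not maximized at $s=0$ for every fixed $\tau$. For $\tau=0.9$ it is about $1.30$ at $s=0$ but tends to $4/3$ as $s\to1$. In fact it is increasing in $s$ whenever $\tau>1/2$.

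A smaller slip: $\mathrm{REV}$ is not concave on all of $[0,1-\widebar D]$. For $s=0$ and small $\tau$ its second derivative $\tau(-4+6x-6r^2x)$ becomes positive near the right endpoint. Uniqueness of the maximizer still holds, because the FOC is a quadratic that is positive at $x=0$ and negative at $x=1-\widebar D$.

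\textbf{The missing idea.} For $d=2$ the two-parameter family collapses to a single parameter. Normalize the action as $k=x/(1-\widebar D)\in[0,1]$ and set $p=1-\tau(1-s)$, $q=\widebar D=\tau+(1-\tau)s$.

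Dividing your FOC by $1-s$ turns it into $3(p-q)k^2-4pk+(p+q)=0$, i.e.\ $p(3k-1)(k-1)=q(3k^2-1)$. After cancelling the common factor $\tfrac13\tau(1-\tau)(1-s)^2$, both welfare values are linear in $(p,q)$, so
\[
PoA=\frac{p+2q}{k^\star\bigl[(p-q)k^{\star2}-3pk^\star+3(p+q)\bigr]}.
\]
The FOC forces $k^\star\in(1/3,1/\sqrt{3})$: the difference of its two sides is $2q/3>0$ at $k=1/3$ and negative at $k=1/\sqrt{3}$.

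Substituting $p/q=(3k^{\star2}-1)/(3k^{\star2}-4k^\star+1)$ gives $PoA=-\frac{9k^{\star2}-8k^\star+1}{k^{\star2}(5k^{\star2}-16k^\star+9)}$. Its derivative has the sign of $6k^\star(k^\star-1)^2(3k^\star-1)(5k^\star-3)$, which is negative on that interval. Hence $PoA<27/19$, the value at $k^\star=1/3$, approached only as $p/q\to\infty$, i.e.\ $\tau,s\to0$.

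This is essentially the paper's argument. It also explains why your $s$-monotonicity fails: $p/q$ is decreasing in $s$ only when $\tau<1/2$.
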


We introduce the parameterization used in the proofs. We assume without loss of generality that the 2-piecewise constant demand is $D(t)=1$ for $t\in[0,t_1]$ and $D(t)=(1-\varepsilon)x$ for $t\in(t_1,1]$. That is, we normalize the peak demand to $1$ and let the average demand over $[0,1]$ be $x \in (0,1)$. The off-peak demand is $(1-\varepsilon)x$ with $\varepsilon \in (0,1)$, giving peak duration $t_1 = \frac{x\varepsilon}{1 - x(1-\varepsilon)}$.

We assume the battery operation takes the form $B_k(t)=k(1-x)$ for $t\in[0,t_1]$ and $B_k(t)=-k\varepsilon x$ for $t\in(t_1,1]$, where $k\in[0,1]$ since $k<0$ or $k>1$ will always provide negative profit. Under this notation, the battery operation is represented by a single scalar $k$.

To simplify the analysis, let $N_1(k) = 1 - k(1-x)$ and $N_2(k) = (1-\varepsilon)x + k\varepsilon x$ denote the net demand in the peak and off-peak periods, respectively. We omit $N_1$'s and $N_2$'s dependency of $k$ when the context is clear. We can represent the revenue yields by battery operation $k$ as
\begin{align*}
    \mathrm{REV}(k)&=\int_0^1 B_k(t)\cdot P(D(t)-B_k(t))\di t=t_1\cdot k(1-x)\cdot (P(N_1)-P(N_2)).
\end{align*}

Thus, under our parameterization, we can denote $\mathcal{B}_{DCB}$, the set of optimal solutions to Programming \eqref{obj:DCB} as
\begin{align}
    \label{obj:simplied-k-poly}
    \mathcal{B}_{DCB}=\argmax_{k\in[0,1]} t_1\cdot k(1-x)\cdot (P(N_1)-P(N_2)),
\end{align}

And the social cost of the instance with parameter $x$ and $\varepsilon$ with battery operation $k$ can be represented by
\begin{align*}
    \psi_{x,\varepsilon}(k)&= \int_0^1 G(D(t)-B_k(t))\di t =\frac{\alpha}{d+1}\cdot(t_1\cdot N^{d+1}_1+(1-t_1)\cdot N^{d+1}_2).
\end{align*}

And we can represent the improvement of social cost by battery operation $k$ as $\mathrm{WEL}(k)=\psi_{x,\varepsilon}(0)-\psi_{x,\varepsilon}(k)$.

When $k=1$, the battery maximizes the improvement of social cost as stated in Claim \ref{clm:cb-average} since $N_1(1)=N_2(1)=x$ meaning the net demand equals the average demand everywhere. Therefore, $W(1)$ represents the optimal improvement on social welfare. To compare the optimal improvement with the improvement by battery operation $k$, we define $\varphi_{x,\varepsilon}(k)=\mathrm{WEL}(1)/\mathrm{WEL}(k)$, and the Price of Anarchy can be represented as $PoA=\varphi_{x,\varepsilon}(k^\star)$ where $k^\star=\argmax \varphi_{x,\varepsilon}(k)$.

\noindent \emph{Proof of Theorem \ref{thm:PoA-mono-ub}.} We start with several technical lemmas. Lemma \ref{lem:monotonicity-k} allows us to upper-bound $\varphi_{x,\varepsilon}(k^\star)$ by lower-bounding $k^\star$ using $\underline{k}$ determined in Lemma \ref{lemma:lb-k}. The proof of Lemma \ref{lem:monotonicity-k} is presented in Appendix \ref{appendix:proof-lem:monotonicity-k}.
%and \ref{appendix:proof-lemma:lb-k}.
    \begin{lemma}
        \label{lem:monotonicity-k}
        $\varphi_{x,\varepsilon}(k)$ by is monotonically decreasing with respect to $k$.
    \end{lemma}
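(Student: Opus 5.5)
Since $\varphi_{x,\varepsilon}(k)=\mathrm{WEL}(1)/\mathrm{WEL}(k)$ and $\mathrm{WEL}(1)>0$ does not depend on $k$, the claim reduces to showing that $\mathrm{WEL}(k)=\psi_{x,\varepsilon}(0)-\psi_{x,\varepsilon}(k)$ is positive and increasing on $(0,1]$. Equivalently, we show that the social cost $\psi_{x,\varepsilon}(k)$ is decreasing in $k$ on $[0,1]$. Monotonicity of the reciprocal then follows immediately.

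Differentiating, and using $N_1'(k)=-(1-x)$ and $N_2'(k)=\varepsilon x$, we get
\begin{align*}
\psi_{x,\varepsilon}'(k)=\alpha\bigl(-t_1(1-x)N_1^d+(1-t_1)\varepsilon x N_2^d\bigr).
\end{align*}
The key identity is the periodicity balance $t_1(1-x)=(1-t_1)\varepsilon x$. It follows from $t_1=\frac{x\varepsilon}{1-x(1-\varepsilon)}$: indeed $1-t_1=\frac{1-x}{1-x(1-\varepsilon)}$, so both sides equal $\frac{x\varepsilon(1-x)}{1-x(1-\varepsilon)}$. This is exactly the statement that $B_k$ integrates to zero. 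Substituting, we obtain
\begin{align*}
\psi_{x,\varepsilon}'(k)=-\alpha\, t_1(1-x)\bigl(N_1(k)^d-N_2(k)^d\bigr).
\end{align*}

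It remains to show $N_1(k)\ge N_2(k)$ on $[0,1]$. Both are affine in $k$. At $k=0$ we have $N_1=1>(1-\varepsilon)x=N_2$, and at $k=1$ both equal $x$. Hence $N_1(k)-N_2(k)=(1-k)\bigl(1-(1-\varepsilon)x\bigr)\ge 0$, with strict inequality for $k<1$. Since $z\mapsto z^d$ is increasing on $[0,1]$, we get $\psi'\le 0$, with strict inequality on $[0,1)$. So $\mathrm{WEL}(k)$ is strictly increasing, and $\mathrm{WEL}(k)>0$ for $k\in(0,1]$. Therefore $\varphi_{x,\varepsilon}$ is decreasing on $(0,1]$.

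The only delicate point is the endpoint $k=0$, where $\mathrm{WEL}(0)=0$ and $\varphi=\infty$. This matches the PoA convention and is consistent with monotonicity in the extended sense. There is no real obstacle beyond verifying the balance identity for $t_1$.
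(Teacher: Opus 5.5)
Your proof is correct and follows essentially the same route as the paper. Both arguments differentiate the social cost $\psi_{x,\varepsilon}$, use the balance $t_1(1-x)=(1-t_1)\varepsilon x$ to write the derivative as $-\alpha t_1(1-x)(N_1^d-N_2^d)$, and conclude from $N_1-N_2=(1-k)(1-(1-\varepsilon)x)\ge 0$. You are slightly more careful than the paper, which labels this derivative $\partial\varphi/\partial k$, whereas you make explicit the passage from $\psi$ decreasing to $\mathrm{WEL}$ increasing and positive to $\varphi$ decreasing.
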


    \begin{lemma}
        \label{lemma:lb-k}
        The optimal solution $k^*$ to Programming \eqref{obj:simplied-k-poly} is lower-bounded by $\underline{k}$ defined as
        \begin{align*}
            \underline{k} = \frac{2+d(1-x)-\sqrt{d^2(1-x)^2+4x}}{2 (d+1) (1-x)},
        \end{align*}
        where $d$ is the maximum degree of the price function $P$.
    \end{lemma}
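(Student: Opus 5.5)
The plan is to bound the maximizer of $R(k)=t_1(1-x)\,k\,\bigl(P(N_1(k))-P(N_2(k))\bigr)$ from below by showing that $R'(k)>0$ for every $k\in[0,\underline{k})$. Then $R(k)<R(\underline{k})$ for every such $k$, so no maximizer of \eqref{obj:simplied-k-poly} can lie below $\underline{k}$. This avoids needing concavity of $R$. Since $t_1(1-x)>0$ is a constant factor, it suffices to study
\begin{align*}
h(k)=P(N_1)-P(N_2)-k\bigl((1-x)P'(N_1)+\varepsilon x P'(N_2)\bigr),
\end{align*}
using $N_1'=-(1-x)$ and $N_2'=\varepsilon x$. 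Note that $N_1-N_2=(1-k)\bigl(1-(1-\varepsilon)x\bigr)\ge 0$ on $[0,1]$.

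First I would match the target. The number $\underline{k}$ is exactly the smaller root of
\begin{align*}
q(k)=(d+1)(1-x)k^2-\bigl(2+d(1-x)\bigr)k+1 .
\end{align*}
Indeed, the discriminant simplifies to $\bigl(2+d(1-x)\bigr)^2-4(d+1)(1-x)=d^2(1-x)^2+4x$. Moreover $q(0)=1>0$, so $q>0$ on $[0,\underline{k})$. The goal is therefore a chain of inequalities showing $h(k)\ge c\cdot q(k)$ for some $c>0$, after normalizing by $\alpha N_1^{d}$ or $\alpha N_1^{d-1}$.

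To get there, I would first remove the dependence on $\varepsilon$. Dividing $h$ by $\alpha N_1^d$ and setting $r=N_2/N_1\in[0,1]$ gives the condition
\begin{align*}
1-r^d>\frac{kd\bigl((1-x)+\varepsilon x\,r^{d-1}\bigr)}{N_1}.
\end{align*}
I would show that the worst case over $\varepsilon\in(0,1)$ is the limit $\varepsilon\to 0$, where $N_2\to x$. The intuition is that increasing $\varepsilon$ lowers $N_2$, which widens the price gap more than it raises the correction term $\varepsilon x P'(N_2)$. This reduces the problem to a one-parameter inequality in $u=x/N_1$ with $N_1=1-k(1-x)$.

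The remaining step is to lower-bound $1-u^d$ by a quantity linear in $d$, for example via convexity, $1-u^d\ge d\,u^{d-1}(1-u)$, or by combining this with the tangent bound at $u=1$. After clearing denominators, this should turn the inequality into $q(k)>0$. I expect the main obstacle to be this step together with the $\varepsilon$-reduction. The crude bound $1-u^d\ge 1-u$ only yields $k<1/(d+1)$, which is too weak. So I would first try to write the difference $P(N_1)-P(N_2)$ as $\int_{N_2}^{N_1}P'$ and compare it termwise with $kP'(N_1)(1-x)$, using monotonicity of $P'$. The quadratic structure of $q$, with the $k^2$ coefficient $(d+1)(1-x)$, suggests this is the right comparison to land on.
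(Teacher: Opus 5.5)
Your outer strategy matches the paper's: show the objective is strictly increasing on $[0,\underline{k})$, where $\underline{k}$ is the smaller root of $q(k)=(d+1)(1-x)k^2-(2+d(1-x))k+1$. Your $\varepsilon$-reduction is also correct, though you only give intuition for it. With $c=x/N_1$ and $r=c(1-(1-k)\varepsilon)$, the $\varepsilon$-derivative of $1-r^d-kd((1-x)+\varepsilon x r^{d-1})/N_1$ is $dc\,r^{d-2}\bigl[(1-2k)r+(d-1)k(1-k)\varepsilon c\bigr]$. This is positive for $k<\underline{k}$, because $q(1/2)=(1-x)(1-d)/4\le 0$ forces $\underline{k}\le 1/2$.

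The genuine gap is the step you flag as the main obstacle, and for $d\ge 2$ none of the tools you propose can close it. Using $1-u=(1-k)(1-x)/N_1$, your $\varepsilon=0$ condition is exactly $(1-k)S(u)>kd$, where $S(u)=\sum_{j=0}^{d-1}u^j$. Moreover, $q(k)=0$ is equivalent to $kd/(1-k)=d/(d-(d-1)u)$. So as $k\uparrow\underline{k}$, any lower bound you use for $S(u)$ must be at least $d/(d-(d-1)u)$ at the points $u=x/N_1(\underline{k})$, and these sweep out $(0,1)$ as $x$ varies. Your candidates fail this test:
\begin{itemize}
\item The convexity bound gives only $S(u)\ge du^{d-1}$. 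This tends to $0$ as $x\to 0$, where $\underline{k}\to 1/(d+1)$ and you need $S(u)$ near $1$. For $d=2$, $2u<2/(2-u)$ for every $u<1$.
\item The tangent line at $u=1$ gives $1-u^d\le d(1-u)$, which is the wrong direction.
\item $\max\{1,du^{d-1}\}$ agrees with $d/(d-(d-1)u)$ to first order at $u=1$, but has smaller second derivative there, so it still falls short just below $u=1$.
\item Monotonicity of $P'$ only yields $\int_{N_2}^{N_1}P'\ge (N_1-N_2)P'(N_2)$, which is the convexity bound again.
\end{itemize}

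What you need is $S(u)\,(d-(d-1)u)\ge d$, i.e.\ $\sum_{j=1}^{d-1}u^j\ge (d-1)u^d$, which is immediate for $u\in[0,1]$. With it, $(1-k)S(u)>kd$ follows from $1-(d+1)k+(d-1)ku>0$. Multiplying by $N_1$ and using $uN_1=x$ gives exactly $q(k)>0$.

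The paper gets there more cleanly, with no $\varepsilon$-reduction at all:
\begin{itemize}
\item It factors $N_1-N_2=(1-k)(1-(1-\varepsilon)x)$, which cancels the $\varepsilon$-dependent part of $t_1$. The objective is then proportional to $f(k)=k(1-k)X$, with $X=\sum_{a+b=d-1}N_1^aN_2^b$.
\item Since $N_2'\ge 0$, $N_1'<0$ and $a\le d-1$, one has $X'\ge (d-1)(N_1'/N_1)X$ for every $\varepsilon$.
\item Hence $f'(k)\ge X\,q(k)/N_1>0$ on $[0,\underline{k})$.
\end{itemize}
Equivalently, write $X=N_1^{d-1}S(N_2/N_1)$. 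The factor $S(N_2/N_1)$ is nondecreasing in $k$, and $\underline{k}$ is simply the point where $k(1-k)N_1^{d-1}$ stops increasing, since its derivative is $N_1^{d-2}q(k)$.
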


The proof is presented in Appendix \ref{appendix:proof-lemma:lb-k}.

Next lemma ensures that $\varphi_{x,\varepsilon}(\underline{k})\leq \varphi_{x,1}(\underline{k})$, implying the upper bound is achieved when $\varepsilon=1$. The proof is postponed to Appendix \ref{appendix:proof-lem:monotonicity-eps}. We remind that $\underline{k}$ is irrelevant to $\varepsilon$.
\begin{lemma}
    \label{lem:monotonicity-eps}
    $\varphi_{x,\varepsilon}(k)$ is monotonically increasing with respect to $\varepsilon$.
\end{lemma}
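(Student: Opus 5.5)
The plan is to rewrite $\mathrm{WEL}(k)$, after dividing by the peak duration $t_1$, as an integral over the battery level $u\in[0,k]$ of a single nonnegative integrand $g_\varepsilon(u)$. Monotonicity in $\varepsilon$ then reduces to a monotone-likelihood-ratio comparison between a ``tail'' integral over $[k,1]$ and a ``head'' integral over $[0,k]$. Throughout, $x$, $d$ and $k\in(0,1]$ are fixed, and write $a=1-x$. For $k=1$ we have $\varphi\equiv 1$, so there is nothing to prove in that case.

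\emph{Step 1 (reduction to an integral).} The relation $t_1 a=(1-t_1)\varepsilon x$ comes from periodicity and gives $(1-t_1)/t_1=a/(\varepsilon x)$. Dropping the factor $\alpha/(d+1)$, we get
\begin{align*}
\frac{\mathrm{WEL}(k)}{t_1}=\bigl[1-(1-ka)^{d+1}\bigr]-\frac{a\,x^{d}}{\varepsilon}\Bigl[(1-\varepsilon+k\varepsilon)^{d+1}-(1-\varepsilon)^{d+1}\Bigr].
\end{align*}
We write each bracket as the integral of its derivative in $k$. This yields
\begin{align*}
\frac{\mathrm{WEL}(k)}{t_1}=(d+1)a\int_0^k g_\varepsilon(u)\,\di u,\qquad g_\varepsilon(u)=p(u)^d-q(u)^d,
\end{align*}
where $p(u)=1-ua$ and $q(u)=x(1-\varepsilon(1-u))$. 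The factor $t_1$ cancels in the ratio, so
\begin{align*}
\varphi_{x,\varepsilon}(k)=1+\frac{\int_k^1 g_\varepsilon}{\int_0^k g_\varepsilon}.
\end{align*}
A direct check gives $p-q=(1-u)(1-x+x\varepsilon)\ge 0$. Hence $g_\varepsilon\ge 0$, with equality only at $u=1$.

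\emph{Step 2 (log-derivative is increasing in $u$).} Let $r(u)=\partial_\varepsilon g_\varepsilon(u)/g_\varepsilon(u)$. We have $\partial_\varepsilon g_\varepsilon=d\,x(1-u)q^{d-1}$. Factor $p^d-q^d=(p-q)\sum_{j=0}^{d-1}p^jq^{d-1-j}$; the factor $(1-u)$ then cancels. We obtain
\begin{align*}
r(u)=\frac{d\,x}{1-x+x\varepsilon}\cdot\Bigl(\sum_{j=0}^{d-1}(p/q)^j\Bigr)^{-1}.
\end{align*}
In $u$, $p$ is decreasing and $q$ is increasing, so $p/q\ge 1$ is decreasing. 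Hence the sum is decreasing and $r$ is increasing in $u$. For non-integer $d$ we would instead use $p^d-q^d=d(p-q)\int_0^1(q+s(p-q))^{d-1}\di s$. The same conclusion then follows, since the integrand's normalized form depends only on $p/q$.

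\emph{Step 3 (conclusion).} We differentiate the log of the tail-to-head ratio:
\begin{align*}
\partial_\varepsilon\log\frac{\int_k^1 g_\varepsilon}{\int_0^k g_\varepsilon}=\frac{\int_k^1 r\,g_\varepsilon}{\int_k^1 g_\varepsilon}-\frac{\int_0^k r\,g_\varepsilon}{\int_0^k g_\varepsilon}.
\end{align*}
This is a difference of $g_\varepsilon$-weighted averages of $r$ over $[k,1]$ and over $[0,k]$. Since $r$ is increasing, the first average is at least $r(k)$ and the second is at most $r(k)$. So the derivative is nonnegative, and $\varphi_{x,\varepsilon}(k)$ is increasing in $\varepsilon$.

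The main obstacle is Step 1: finding the representation in which $t_1$ cancels and the $\varepsilon$-dependence enters only through the integrand. Once the identity $p-q=(1-u)(1-x+x\varepsilon)$ is spotted, Steps 2–3 are routine. A minor technicality is justifying differentiation under the integral sign, which is immediate because the integrands are smooth on compact sets.
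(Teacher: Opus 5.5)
Your proof is correct and is essentially the paper's argument. Your $r(u)$ is exactly the $k$-dependent term $\frac{dx(1-k)N_2^{d-1}}{N_1^d-N_2^d}$ of the paper's $Q(k)=\partial_\varepsilon\ln\psi'_{x,\varepsilon}(k)$, shown increasing by the same observation that $N_1/N_2$ decreases in $k$. Your weighted-average comparison in Step~3 is a self-contained substitute for the monotone l'H\^opital rule (Lemma~2.2 of \cite{AVV93}) that the paper invokes. Dividing out $t_1$ up front removes the $k$-independent term $\frac{1}{\varepsilon}-\frac{x}{1-(1-\varepsilon)x}=\partial_\varepsilon\ln t_1$, which the paper carries along.

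One small caveat concerns your aside on non-integer $d$: the monotonicity of $r$ in $u$ holds only for $d\ge 1$ and reverses for $d\in(0,1)$. This is harmless for the lemma, since $d$ is a positive integer there.
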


The last part of our proof is to show $\varphi_{x,1}(\underline{k})\leq 2$ for any $x\in(0,1)$ and any positive integer $d$. By a change of variable $r=\underline{k}(1-x)\in(0,1/(d+1))$, we reduce this to showing $\frac{1-x^{d+1}}{1-(1-r)^{d+1}}\leq 2$, which is established via a monotonicity argument on an auxiliary function. The detailed calculation is presented in Appendix \ref{appendix:proof-completing-mono-ub}. \qed

In case $d=2$, we can obtain an analytical form of $k^\star$ and a tight estimation of $PoA=27/19$. The proof of Theorem \ref{thm:PoA-mono-2-ub} follows a different structure from the general monomial case and is presented in full in Appendix \ref{appendix:proof-thm-mono-2-ub-full}.

\subsubsection{Lower Bounds}
\label{sec:lb}

We complement the upper bounds with a family of constructions showing that the Price of Anarchy grows with the monomial degree.

\begin{theorem}
    \label{thm:PoA-mono-lb}
    Let $\mathcal{I}_{d\text{, step}}$ be the subset of $\mathcal{I}_{\text{mono,step}}$ where the price function is monomial of degree $d$, we have
    $PoA(\mathcal{I}_{d\text{, step}})\geq 1/(1-(\frac{d}{d+1})^{d+1})$. When the degree $d$ approaches infinity, $PoA$ is lower-bounded by $e/(e-1)$.
\end{theorem}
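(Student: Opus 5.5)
The plan is to exhibit an explicit one-parameter family inside $\mathcal{I}_{d\text{, step}}$ and pass to a limit. In the parameterization introduced before the proof of Theorem~\ref{thm:PoA-mono-ub}, I would take $\varepsilon=1$, i.e.\ off-peak demand $0$, peak demand $1$, and peak duration $t_1=x$, and then let the average demand $x\to 0^+$. Step demand with $D_2=0$ is a legitimate member of $\mathcal{I}_{\text{step}}$. If one insists on $\varepsilon\in(0,1)$, the same argument works along $\varepsilon\to 1^-$, since all quantities below are continuous in $\varepsilon$. Since $PoA(\mathcal{I}_{d\text{, step}})$ is a supremum, and the per-instance PoA takes the worst decentralized optimum, it suffices to show that $\varphi_{x,1}(k^\star)\to 1/(1-(\frac{d}{d+1})^{d+1})$ for some maximizer $k^\star$ of \eqref{obj:simplied-k-poly}.

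\textbf{Step 1: locate the decentralized optimum.} With $\varepsilon=1$ we have $N_1=1-k(1-x)$ and $N_2=kx$, so
\begin{align*}
\mathrm{REV}(k)=\alpha x(1-x)\,k\bigl[(1-k(1-x))^d-(kx)^d\bigr].
\end{align*}
Dividing by $\alpha x(1-x)$ gives $h_x(k)=k[(1-k(1-x))^d-(kx)^d]$. As $x\to0$, $h_x$ converges uniformly on $[0,1]$ to $h_0(k)=k(1-k)^d$. The limit $h_0$ has the unique maximizer $k=1/(d+1)$ on $[0,1]$. By a standard argmax-continuity argument (uniform convergence plus uniqueness of the limit maximizer, as in Berge's maximum theorem), every maximizer $k^\star(x)$ of $h_x$ therefore converges to $1/(d+1)$. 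Alternatively, one could sandwich $k^\star$ using Lemma~\ref{lemma:lb-k}, whose bound $\underline{k}\to 1/(d+1)$ as $x\to0$, together with a matching upper estimate from the first-order condition.

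\textbf{Step 2: compute the welfare ratio.} With $\alpha'=\alpha/(d+1)$,
\begin{align*}
\mathrm{WEL}(k)=\alpha'\bigl[x\bigl(1-(1-k(1-x))^{d+1}\bigr)-(1-x)(kx)^{d+1}\bigr].
\end{align*}
The second term is $O(x^{d+1})=o(x)$ uniformly in $k\in[0,1]$. Hence
\begin{align*}
\frac{\mathrm{WEL}(k)}{\alpha' x}\to 1-(1-k)^{d+1}
\end{align*}
uniformly in $k$. This limit is bounded away from $0$ near $k=1/(d+1)$, so
\begin{align*}
\varphi_{x,1}(k^\star(x))=\frac{\mathrm{WEL}(1)}{\mathrm{WEL}(k^\star(x))}\to\frac{1}{1-(1-\frac{1}{d+1})^{d+1}}=\frac{1}{1-(\frac{d}{d+1})^{d+1}}.
\end{align*}
For $\mathrm{WEL}(1)$ itself, the limit of $\mathrm{WEL}(1)/(\alpha'x)$ is exactly $1$. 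This also confirms $\mathrm{WEL}(k^\star)>0$ for small $x$, so the first case of the PoA definition applies. Taking the supremum over $x$ gives the stated lower bound. Finally, $(\frac{d}{d+1})^{d+1}=(1-\frac{1}{d+1})^{d+1}\to e^{-1}$, which yields the $e/(e-1)$ limit.

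\textbf{Main obstacle.} The only delicate point is Step 1: making the convergence $k^\star(x)\to 1/(d+1)$ rigorous when $\mathrm{REV}$ may have several maximizers. The uniform-convergence argmax argument handles this cleanly, since any sequence of maximizers must converge to the unique maximizer of $h_0$. I would present that argument first and keep the Lemma~\ref{lemma:lb-k} route as a cross-check.
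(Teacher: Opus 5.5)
Your argument is correct and uses the same instance as the paper: peak demand $1$ on an interval of length $x$ and off-peak demand $0$. The paper calls this length $\varepsilon$ and takes $P(z)=(d+1)z^d$. The difference lies in how the decentralized optimum is handled.

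You prove $k^\star(x)\to 1/(d+1)$ from uniform convergence of $h_x$ to $h_0$ and uniqueness of the limiting maximizer, and then evaluate the welfare ratio exactly in the limit. The paper takes no limit in the maximizer. Writing $y=k(1-x)$ for the peak discharge, it uses $f'(y)\le (1-y)^{d-1}(1-(d+1)y)$ to get only the one-sided bound $y^\star\le 1/(d+1)$. It then drops the nonnegative off-peak term $(1-x)(k^\star x)^{d+1}$ from $\mathrm{WEL}(k^\star)$ and uses monotonicity of $G$. This gives the explicit bound $PoA\ge (1-x^d)/\bigl(1-(\frac{d}{d+1})^{d+1}\bigr)$ for every instance in the family, and $x\to0$ is taken only in this closed form.

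The paper's route is more elementary and quantitative for each fixed instance. Yours proves slightly more: the PoA along this family converges exactly to $1/\bigl(1-(\frac{d}{d+1})^{d+1}\bigr)$, so this family cannot certify a larger lower bound.

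For the lower bound itself, only the upper side of your proposed sandwich matters, because $\mathrm{WEL}(k)$ is increasing on $[0,1]$ (the derivative computed in the proof of Lemma~\ref{lem:monotonicity-k}). Lemma~\ref{lemma:lb-k} bounds $k^\star$ from below, which is the direction used for the upper bound in Theorem~\ref{thm:PoA-mono-ub}. So it would only serve your convergence cross-check.
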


Observing the tightness in the linear case ($d=1$, $PoA=4/3$) and the quadratic case ($d=2$, $PoA=27/19$), and supported by extensive numerical verification, we conjecture that the lower bound in Theorem~\ref{thm:PoA-mono-lb} is tight for every positive integer degree~$d$, suggesting a direct, monotonic relationship between the degree and market inefficiency.

\noindent \emph{Proof sketch of Theorem \ref{thm:PoA-mono-lb}.} Consider demand $D(t)=1$ for $t\in[0,\varepsilon]$, $D(t)=0$ for $t\in(\varepsilon,1]$, and price $P(z)=(d+1) z^d$ with $G(z)=z^{d+1}$. By Claim \ref{clm:cb-average}, $\mathrm{WEL}(B_{CB})=\varepsilon-\varepsilon^{d+1}$. Upper-bounding the optimal decentralized operation via first-order conditions gives $x^\star\leq 1/(d+1)$, which yields
\begin{align*}
    PoA \;\geq\; \frac{1-\varepsilon^d}{1-\bigl(\frac{d}{d+1}\bigr)^{d+1}}.
\end{align*}
Taking $\varepsilon\to0$ yields the result. The full proof is in Appendix \ref{appendix:proof-thm-mono-lb}. \qed
    
\section{Extension to Multiple Batteries}
\label{sec:multi-battery}

The preceding sections quantify the efficiency loss of a single strategic battery. A classical remedy for market power is competition: in many economic settings, increasing the number of strategic agents drives the equilibrium toward social optimality. In this section, we extend the linear pricing analysis to $n$ identical batteries competing simultaneously. We prove that the resulting game admits a unique pure Nash equilibrium, which is necessarily symmetric, and that competition monotonically reduces the Price of Anarchy, converging to perfect efficiency as $n\to\infty$.

\subsection{Setup and Nash Equilibrium}

Consider $n$ identical batteries operating simultaneously in a market with linear pricing $P(z)=az+b$. Each battery $i\in\{1,\ldots,n\}$ chooses an operation $B_i\in\mathcal{B}\cap\Omega$. The aggregate battery operation is $S(t)=\sum_{i=1}^n B_i(t)$, and the net demand becomes $N(t)=D(t)-S(t)$. The box constraint \eqref{constraint:Box} is imposed on the aggregate net demand, while the non-anticipating and periodicity requirements and the operational set $\Omega$ apply to each $B_i$ individually. Under linear pricing the price function is defined for all net-demand levels, so this aggregate feasibility requirement does not affect the equilibrium or welfare computations below.

Each battery is a strategic agent that observes the strategies of all other batteries and chooses its own operation to maximize its expected revenue. Specifically, the revenue of battery $i$ depends on both its own operation and the aggregate:
\begin{align*}
    \mathrm{REV}_i(B_i; B_{-i}) = \mathbb{E}\left[\int_0^1 B_i(t)\cdot P\bigl(D(t)-S(t)\bigr)\,\di t\right],
\end{align*}
where $B_{-i}=(B_1,\ldots,B_{i-1},B_{i+1},\ldots,B_n)$ denotes the strategies of all batteries other than $i$. The social welfare is measured by the total reduction in generation cost:
\begin{align*}
    \mathrm{WEL}(B_1,\ldots,B_n) = \mathbb{E}\left[\int_0^1 G(D(t))-G(D(t)-S(t))\,\di t\right].
\end{align*}

We adopt the standard game-theoretic solution concept of Nash equilibrium, where no battery can unilaterally improve its revenue.

\begin{definition}[Nash Equilibrium]
\label{def:nash}
A strategy profile $(B_1^*,\ldots,B_n^*)$ with each $B_i^*\in\mathcal{B}\cap\Omega$ is a \textit{(pure) Nash equilibrium} if no battery can increase its revenue by unilaterally deviating:
\begin{align*}
    \mathrm{REV}_i(B_i^*; B_{-i}^*) \geq \mathrm{REV}_i(B_i; B_{-i}^*), \quad \forall B_i\in\mathcal{B}\cap\Omega,\quad \forall i\in\{1,\ldots,n\}.
\end{align*}
Equivalently, each $B_i^*$ is a best response to the other batteries' strategies $B_{-i}^*$.
\end{definition}

The Price of Anarchy for $n$ batteries is the ratio of the centralized welfare (where a planner jointly controls all $n$ batteries) to the welfare at Nash equilibrium. When $n=1$, this reduces to the single-battery PoA defined earlier.

\subsection{Efficiency at Equilibrium}

Under linear pricing, the game has a particularly clean structure. Using the inner product notation from Section~\ref{sec:linear}, the welfare and revenue simplify to $\mathrm{WEL}=\tfrac{1}{2}a(2\langle D,S\rangle - \|S\|^2)$ and $\mathrm{REV}_i=a(\langle D,B_i\rangle - \langle S,B_i\rangle)$, extending Equations \eqref{eq:W-linear}--\eqref{eq:REV-linear}.

A key structural insight is that this game admits an exact potential function:
\begin{align*}
    \Phi(B_1,\ldots,B_n) = \langle D,S\rangle - \tfrac{1}{2}\|S\|^2 - \tfrac{1}{2}\sum_{i=1}^n\|B_i\|^2.
\end{align*}
One can verify that $\nabla_{B_i}\Phi = D - S - B_i = \frac{1}{a}\nabla_{B_i}\mathrm{REV}_i$, confirming that unilateral deviations in revenue are exactly captured by changes in $\Phi$. The quadratic form of $\Phi$ corresponds to $-\frac{1}{2}\mathbf{B}^\top(J+I)\mathbf{B}$, where $J$ is the $n\times n$ all-ones matrix and $I$ the identity. Since $J+I$ is positive definite (with eigenvalues $n+1$ and $1$ of multiplicities $1$ and $n-1$, respectively), $\Phi$ is strictly concave on $(\mathcal{B}\cap\Omega)^n$.

\begin{theorem}
\label{thm:PoA-linear-n}
    For $n$ identical batteries under linear pricing $P(z)=a\cdot z+b$ with $a>0,b\geq 0$, the game admits a unique pure Nash equilibrium, which is symmetric: $B_i^*=B^*$ for all $i$. The Price of Anarchy at this equilibrium satisfies $PoA(\mathcal{I}_{\mathrm{lin}})=(n+1)^2/\bigl(n(n+2)\bigr)$. In particular, $PoA=4/3$ when $n=1$ and $PoA\to 1$ as $n\to\infty$.
\end{theorem}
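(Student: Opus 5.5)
The plan is to reduce everything to the potential $\Phi$ and then repeat the variational-inequality argument from the proof of Theorem~\ref{thm:PoA-linear}, with the constant $4/3$ replaced by an $n$-dependent one. Write $K=\mathcal{B}\cap\Omega$; it is convex and contains $0$.

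\emph{Step 1: equilibria are exactly the maximizers of $\Phi$.} Each $\mathrm{REV}_i(\cdot;B_{-i})=a(\langle D,B_i\rangle-\langle S,B_i\rangle)$ is concave in $B_i$, with gradient $a(D-S-B_i)$. Hence, by Lemma~\ref{lemma:variational_inequalities} and its converse for concave functionals, a profile is a Nash equilibrium iff for every $i$
\begin{align*}
\langle D-S^*-B_i^*,\,B_i-B_i^*\rangle\le 0\quad \forall B_i\in K .
\end{align*}
Since $\nabla_{B_i}\Phi=D-S-B_i$, these $n$ inequalities together form the first-order condition for maximizing $\Phi$ over the product set $K^n$. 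That condition is sufficient because $\Phi$ is concave.

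Strict concavity of $\Phi$ ($J+I\succ0$) then gives uniqueness of the maximizer, up to null sets in the norm $\|\cdot\|$. Existence follows from the standing attainment assumption, applied to $\Phi$. $\Phi$ is invariant under permuting the $B_i$, so any permutation of the maximizer is again a maximizer. By uniqueness, $B_i^*=B^*$ for all $i$.

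\emph{Step 2: the upper bound.} Put $Y=nB^*$ (the equilibrium aggregate) and $\beta=(n+1)/n$. The planner's feasible aggregates form the Minkowski sum $K+\dots+K=nK$, which is convex, contains $0$ and contains $Y$. Let $Z$ be the planner's optimal aggregate.

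Substituting $S^*=nB^*$ into the condition of Step 1 gives
\begin{align*}
\langle D-(n+1)B^*,\,B-B^*\rangle\le0\quad \forall B\in K .
\end{align*}
Rescaling by $n$, this reads
\begin{align*}
\langle D-\beta Y,\,Z'-Y\rangle\le0\quad \forall Z'\in nK .
\end{align*}
Taking $Z'=0$ gives $\langle D,Y\rangle\ge\beta\|Y\|^2$. Taking $Z'=Z$ gives
\begin{align*}
2\langle D,Z\rangle-\|Z\|^2\le 2\langle D,Y\rangle+2\beta\langle Y,Z\rangle-2\beta\|Y\|^2-\|Z\|^2 .
\end{align*}
Maximizing the right-hand side over $Z$ (completing the square $-\|Z-\beta Y\|^2$) bounds it by $2\langle D,Y\rangle+(\beta^2-2\beta)\|Y\|^2$.

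With $c=(n+1)^2/(n(n+2))=\beta^2 n/(n+2)$, it suffices to show
\begin{align*}
2(c-1)\langle D,Y\rangle\ge(\beta^2-2\beta+c)\|Y\|^2 .
\end{align*}
Using $\langle D,Y\rangle\ge\beta\|Y\|^2$, this reduces to the scalar inequality $2(c-1)\beta\ge\beta^2-2\beta+c$. I expect this to hold with equality. The check for $n=1$ works: $\beta=2$, $c=4/3$, and both sides equal $4/3$. For general $n$ I will verify it directly, and adjust the multiplier if needed so that the slack becomes a perfect square, as in the $4/3$ proof.

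The degenerate cases are handled as in Theorem~\ref{thm:PoA-linear}. If $\mathrm{WEL}(Y)=0$, then $\langle D,Y\rangle\ge\beta\|Y\|^2$ forces $Y=0$, and the bound forces $\mathrm{WEL}(Z)=0$, so $PoA=1$.

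\emph{Step 3: tightness.} Take deterministic demand as in Example~\ref{exp:battery}, with no constraints and $\tilde D=D-\bar D$. The equilibrium is $B^*=\tilde D/(n+1)$, so $Y=\tfrac{n}{n+1}\tilde D$, while $Z=\tilde D$ by Claim~\ref{clm:cb-average}. Then
\begin{align*}
\mathrm{WEL}(Z)\propto\|\tilde D\|^2,\qquad \mathrm{WEL}(Y)\propto\Bigl(\tfrac{2n}{n+1}-\tfrac{n^2}{(n+1)^2}\Bigr)\|\tilde D\|^2=\tfrac{n(n+2)}{(n+1)^2}\|\tilde D\|^2 ,
\end{align*}
so the ratio is exactly $(n+1)^2/(n(n+2))$.

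The main obstacle I expect is in Step 2. I must justify that the planner's feasible aggregates are exactly $nK$, which uses convexity of $K$. I must also get the algebra of the combined inequality to close with the right constant for every $n$.
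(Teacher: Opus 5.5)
Your proposal is correct and follows essentially the paper's argument. It uses the same potential-function route to a unique symmetric equilibrium, the same two substitutions ($B=0$ and the planner's solution) into $\langle D-(n+1)B^*,B-B^*\rangle\le 0$, and the same perfect square; the paper writes things per battery with a symmetric centralized optimum $S_{CB}=nB_{CB}$, while you work in aggregates $Y=nB^*$, $Z\in nK$, and your square $\|Z-\beta Y\|^2$ is the paper's $n\|\tfrac{n+1}{n}B^*-B_{CB}\|^2$ after rescaling.

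The scalar identity you left unchecked does hold with equality for every $n$: $c-1=1/(n(n+2))$, and both sides equal $2(n+1)/(n^2(n+2))$. For tightness, use a demand normalized into $[0,1]$, such as the paper's two-step demand, since the $1+\sin(2\pi t)$ of Example~\ref{exp:battery} is not.
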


\noindent \emph{Proof sketch.} Existence and uniqueness of the Nash equilibrium follow from the strict concavity of $\Phi$: any Nash equilibrium must maximize $\Phi$ over the convex set $(\mathcal{B}\cap\Omega)^n$, and a strictly concave function admits at most one maximizer. Symmetry follows from the permutation invariance of the game. For the PoA bound, the proof follows the same variational inequality strategy as Theorem \ref{thm:PoA-linear}. At the symmetric equilibrium with $S^*=nB^*$, each battery's first-order condition becomes
\begin{align*}
    \langle D-(n+1)B^*,\, B-B^*\rangle \leq 0, \quad \forall\, B\in\mathcal{B}\cap\Omega.
\end{align*}
Substituting $B=B_{CB}$ and $B=0$ and combining with $\alpha=(n+1)^2/(n(n+2))$ yields
\begin{align*}
    \alpha\bigl(2\langle D,B^*\rangle - n\|B^*\|^2\bigr) - \bigl(2\langle D,B_{CB}\rangle - n\|B_{CB}\|^2\bigr) \;\geq\; n\left\|\tfrac{n+1}{n}B^*-B_{CB}\right\|^2 \geq 0,
\end{align*}
establishing the upper bound. For tightness, consider the $\mathcal{I}_{\text{step}}$ instance with $D(t)=1$ for $t\in[0,1/2]$, $D(t)=0$ for $t\in(1/2,1]$, and $P(z)=az$. Writing $\widetilde{D}=D-\widebar{D}$ for the mean-zero fluctuation, the Nash equilibrium is $B^*=\widetilde{D}/(n+1)$ and the centralized optimum is $B_{CB}=\widetilde{D}/n$, achieving the ratio $(n+1)^2/(n(n+2))$ exactly. The full proof is in Appendix \ref{appendix:proof-thm-linear-n}. \qed

\section{Discussion and Future Research Directions}

We have established that the structure of the price function is the main determinant of market efficiency for battery storage. For linear prices, a variational inequality argument yields a tight PoA of $4/3$ for a single battery, and a potential-game argument gives $(n+1)^2/(n(n+2))$ for $n$ batteries at the unique Nash equilibrium, converging to $1$ as $n$ grows. As the number of competing batteries increases, each battery's market power diminishes and the equilibrium converges to the social optimum, providing a theoretical justification for encouraging competition among storage operators. For general convex prices the loss can be unbounded, even for convex polynomials, so linearity is a sharp boundary. For monomials of degree $d$ the loss stays bounded by $2$ while growing with $d$, with a tight value of $27/19$ at $d=2$ and a lower bound approaching $e/(e-1)$ as $d\to\infty$. Open directions include extending the multi-battery analysis to non-linear pricing, tight bounds for arbitrary degree~$d$, transmission losses, and continuous-time stochastic demand.

\section*{Acknowledgements}
The work of Zhile Jiang and Stratis Skoulakis was funded by the Villum Young Investigator Award no.~72091.

\bibliographystyle{plain}
\bibliography{reference}

\appendix

\section{Proof of Claim \ref{clm:cb-average}}
\label{appendix:proof-clm:cb-average}
By Constraints \eqref{constraint:Periodicity}, we have
    \begin{align*}
        \int_0^1 D(t)-B(t)\di t=\int_0^1 D(t)\di t=\widebar{D}.
    \end{align*}
    Recall the generation cost is always a convex function. Jensen's inequality ensures 
    \begin{align*}
        \int_0^1 G(D(t)-B(t))\di t\geq G\left(\int_0^1 D(t)-B(t)\di t\right)= G(\widebar{D}).
    \end{align*}
    Thus, the maximum improvement $\max_{B\in\mathcal{B}} \mathrm{WEL}(B)$is upper-bounded as 
    \begin{align*}
       \max_{B\in\mathcal{B}} \mathrm{WEL}(B)\leq \int_0^1 G(D)\di t-G(\widebar{D}).
    \end{align*}
    By setting $B_{CB}(t)=D(t)-\widebar{D}$, which is in $\mathcal{B}$, we can achieve this improvement. 
\qed

\section{Full Proof of Theorem \ref{thm:PoA-counterexample}}
\label{appendix:proof-thm-counterexample}

Consider a deterministic demand
\begin{align*}
    D(t)=\begin{cases}
        1,&t\in[0,1/2]\\
        0,&t\in(1/2,1]
    \end{cases}
\end{align*}
and the price function from the main text. The generation cost function is
\begin{align*}
        G(z) =
        \begin{cases}
        2z & z \in [0, \frac{1}{2}] \\
        -\ln{(1 - z)} + \ln{\frac{1}{2}} + 1 & z \in (\frac{1}{2}, 1 - \delta] \\
        \frac{(z - 1 + \delta)^2}{2\delta^2} + \frac{z - 1 + \delta}{\delta} -\ln{2\delta} + 1 &z \in (1 - \delta,1]
        \end{cases}.
    \end{align*}

Both $G$ and $P$ are convex on $[0,1]$. By Claim \ref{clm:cb-average}, $\text{WEL}(B_{\text{CB}}) = \frac{1}{2}(G(0) + G(1)) - G(\frac{1}{2})$.

Under $\mathcal{I}_{step}$, Programming \eqref{obj:DCB} is equivalent to $\max_{x \in [0,1]} \frac{1}{2}x(P(1-x) - P(x))$. By monotonicity of $P$, it suffices to consider $x \in [0,\frac{1}{2}]$. If $x \in [0, \delta)$:
$x(P(1-x) - P(x)) = x(-\frac{x}{\delta^2} + \frac{2}{\delta} - 2)$,
with maximum $(1-\delta)^2$ at $x = \delta-\delta^2$. If $x \in [\delta, \frac{1}{2}]$:
$x(P(1-x) - P(x)) = 1 - 2x$,
with maximum $1-2\delta$ at $x = \delta$. Since $1-\delta^2>1-2 \delta$, we get
$$
\mathrm{WEL}(B_{\text{DCB}}) = \tfrac{1}{2}(G(0) + G(1)) - \tfrac{1}{2}(G(1-\delta+\delta^2) + G(\delta-\delta^2)).
$$
Then
\begin{align*}
PoA &= \frac{\text{WEL}(B_{\text{CB}})}{\text{WEL}(B_{\text{DCB}})} \geq \frac{G(1) - 2G(\frac{1}{2})}{G(1) - G(1- \delta) - G(\delta)} = \frac{\frac{1}{2} - \ln{2\delta}}{\frac{3}{2} - 2\delta}.
\end{align*}
The inequality is ensured by the convexity of $G$. The limit $\lim_{\delta \rightarrow 0^+} PoA = \infty$ completes the proof. \qed

\section{Proof of Corollary \ref{cor:poly-lb}}
\label{appendix:proof-cor:poly-lb}

Our proof relies on the Bernstein Operator defined below.

\begin{definition}(Bernstein Operator \cite{bernstein1912demonstration})
    A Bernstein operator $\mathscr{B}_n: C[0,1] \rightarrow \Pi_n$, defined by 
    $$\mathscr{B}_nf(x) = \sum_{k=0}^nf(\frac{k}{n})\binom{n}{k}x^k(1-x)^{n-k},$$
    where $C[0,1]$ denotes the space of continuous functions defined on the interval $[0,1]$ and $\Pi_n$ denotes the space of polynomials of degree at most $n$.
\end{definition}

We note the following readily established properties for future reference which can be viewed as a special case of the generalized result established in \cite{aldaz2009shape}.\begin{itemize}
    \item \emph{Uniform Convergence:} For any function $f \in C[0,1]$ and corresponding $\{\mathscr{B}_nf\}_{n = 1}^{\infty}$, for all $\varepsilon > 0$, there exists some integer $N$ such that $\forall n > N$ and $\forall x \in [0,1]$, we have $\left| \mathscr{B}_nf(x) - f(x)\right| < \varepsilon$:

    \item \emph{Convexity Preserving:} If function $f \in C[0,1]$ is convex, then $\mathscr{B}_nf$ is a convex polynomial for any $n$.
    \item \emph{Above Approaching:} For any convex function $f \in C[0,1]$ and corresponding $\{\mathscr{B}_nf\}_{n = 1}^{\infty}$, the inequalities $\mathscr{B}_{n}f(x) \geq \mathscr{B}_{n+1}f(x) \geq f(x)$ always hold for $x \in [0,1]$. 
\end{itemize}

By uniform convergence, we select a concrete $\varepsilon$ s.t. $0 < \varepsilon \leq \delta$ and then a large enough degree $n$. Consider the convex price function $P$ used in above counterexample and its corresponding polynomial $\mathscr{B}_nP$. By definition, we get two cost functions $G, \mathscr{G}_n$ and gap between this two functions for all $x \in [0,1]$:
$$
0 \leq \mathscr{G}_n(x) - G(x) = \int_0^z \mathscr{B}_nP(x) \di t - \int_0^z P(x) \di t = \int_0^z \mathscr{B}_nP(x) - P(x) \di t \leq \int_0^z \varepsilon \di t = \varepsilon z \leq \varepsilon.
$$
Denote ${I}_n = (D, \mathscr{B}_nP, \Omega)\in \mathcal{I}_{\text{step}}$. Similar to above analysis, applying Claim \ref{clm:cb-average} on the convex polynomial $\mathscr{B}_nP$ can get the maximum improvement 
$$\text{WEL}_{I_n}(B_{\text{CB}}(I_n)) = \frac{1}{2}(\mathscr{G}_n(0) + \mathscr{G}_n(1)) - \mathscr{G}_n(\frac{1}{2}) \geq \frac{1}{2}(G(0) + G(1)) - G(\frac{1}{2}) - \varepsilon.$$

Use function $F$ to denote the profit function $x(P(1-x) - P(x))$, ignoring the constant $\frac{1}{2}$ for simplicity and $\mathscr{F}$ to denote the function $x(\mathscr{B}_nP(1-x) - \mathscr{B}_nP(x))$. Next is to show $x^* = \argmax_{x\in[0,1]} \mathscr{F}(x) \in [0, 2\delta]$ by contradiction.

Bound the function value using above approaching and uniform convergence
$$x(P(1-x) - P(x)) - \varepsilon x \leq \mathscr{F}(x) = x(\mathscr{B}_nP(1-x) - \mathscr{B}_nP(x)) \leq x(P(1-x) - P(x)) + \varepsilon x.$$
In other words, $\mathscr{F}(x) \in [F(x) - \varepsilon x, F(x) + \varepsilon x]$. Thus $\mathscr{F}(\delta) \in [1- 2\delta - \varepsilon\delta, 1 - 2\delta + \varepsilon\delta]$. For all $x > 2\delta$, $$\mathscr{F}(x) \leq F(x) + \varepsilon x \leq F(x) + \varepsilon.$$
By monotonicity of $P$, it suffices to consider $x \in (2\delta, \frac{1}{2}]$. Together with the above analysis, $F(x)$ is monotonically decreasing when $x \in (2\delta, \frac{1}{2}]$. Thus, 
$$
\forall x \in (2\delta,1], \mathscr{F}(x) \leq F(2\delta) + \varepsilon = 1 - 4\delta + \varepsilon \leq 1 - 3\delta < 1 - 2\delta - \varepsilon \delta \leq \mathscr{F}(\delta).
$$

Bound the $W_{I_n}(B_{DCB}(I_n))$ as follows. Due to convexity of $G$, $G(1-x) + G(x)$ is decreasing when $x \leq \frac{1}{2}$. Then,
\begin{align*}
\text{WEL}_I(B_{\text{DCB}}(I)) &= \frac{1}{2}(\mathscr{G}(0) + \mathscr{G}(1)) - \frac{1}{2}(\mathscr{G}(1-x^*) + \mathscr{G}(x^*)) \\
&\leq \frac{1}{2}(G(0) + G(1)) + \varepsilon - \frac{1}{2}(G(1-x^*) + G(x^*)) \\
&\leq \frac{1}{2}(G(0) + G(1)) + \varepsilon - \frac{1}{2}(G(1-2\delta) + G(2\delta))
\end{align*}

We conclude the Price of Anarchy is unbounded by
\begin{align*}
PoA(I_n) &= \frac{\text{WEL}_{I_n}(B_{\text{CB}}(I_n))}{\text{WEL}_{I_n}(B_{\text{DCB}}(I_n))} \\
& \geq \frac{\frac{1}{2}(G(0) + G(1)) - G(\frac{1}{2}) - \varepsilon}{\frac{1}{2}(G(0) + G(1)) + \varepsilon - \frac{1}{2}(G(1-2\delta) + G(2\delta))} \\
& \geq \frac{\frac{1}{2}(G(0) + G(1)) - G(\frac{1}{2}) - \delta}{\frac{1}{2}(G(0) + G(1)) + \delta - \frac{1}{2}(G(1-2\delta) + G(2\delta))} \\
& = \frac{G(1) - 2G(\frac{1}{2}) -2\delta}{G(1) - G(1- 2\delta) - G(2\delta) + 2\delta} \\
& = \frac{\frac{1}{2} - \ln{2\delta} - 2\delta}{\frac{3}{2} + \ln{2} - 2\delta},
\end{align*}
together with $\lim_{\delta \rightarrow 0^+} PoA = \infty$.
\qed

\section{Proof of Lemma \ref{lem:monotonicity-k}}
\label{appendix:proof-lem:monotonicity-k}
%It suffices to show that $\psi_{x,\varepsilon}(k)$ is monotonically decreasing. 
We show that $\varphi_{x,\varepsilon}(k)$ is monotonically decreasing.
Take the derivative, we have
    \begin{align*}
        \frac{\partial \varphi_{x,\varepsilon}(k)}{\partial k}=-\frac{\alpha \varepsilon x (1-x)}{1-(1-\varepsilon)x} \left(N_1^{d} - N_2^{d} \right).
    \end{align*}
    The function is monotonically decreasing if $N_1(k) \geq N_2(k)$, which is true since
    \begin{align*}
    N_1 - N_2=1-k(1-x)-x(1-(1-k)\varepsilon)=(1-k) ((1-x) + x\varepsilon)\geq 0.
    \end{align*}  
\qed

\section{Proof of Lemma \ref{lemma:lb-k}}
\label{appendix:proof-lemma:lb-k}

    Define $Z(k)=t_1\cdot k(1-x)\cdot (P(N_1)-P(N_2))$, we can simplify it as

\begin{align*}
    Z(k)&=\frac{(1-x)x\varepsilon}{1-(1-\varepsilon)x}\cdot k\cdot\alpha \cdot(N^d_1-N^d_2)\\
    &=\frac{\alpha(1-x)x\varepsilon}{1-(1-\varepsilon)x}\cdot k\cdot(N_1-N_2)\cdot X=\alpha(1-x)x\varepsilon \cdot k(1-k)\cdot X,
\end{align*}
where
\begin{align*}
    X=\sum_{a+b=d-1,\atop a\geq0,b\geq0}N_1^a N_2^b.
\end{align*}
Since $x$,$\varepsilon$, and $\alpha$ are all constant fixed by the instance, it is sufficient to maximize
\begin{align*}
    f_{x,\varepsilon}(k)=k(1-k)\cdot X.
\end{align*}
Taking the derivative of $f(k)$, we have
\begin{align*}
    f'(k)=(1-2k)\cdot X+k(1-k)\cdot X'.
\end{align*}
And we can lower-bound $X'$ as follows.
\begin{align*}
    X'&=\sum_{a+b=d-1,\atop a\geq0,b\geq0}(aN'_1 N_2+bN_1 N'_2)\cdot N_1^{a-1}N_2^{b-1}\\
    &\geq\sum_{a+b=d-1,\atop a\geq0,b\geq0}(d-1)(N'_1 N_2)\cdot N_1^{a-1}N_2^{b-1}=(d-1)\frac{N_1'}{N_1}\cdot X,
\end{align*}

The inequality holds due to $N_1'N_2-N_1 N_2'=(x-1)x(1-(1-k)\varepsilon-(1-k(1-x)))x\varepsilon=x(x(1-\varepsilon)-1)<0$.
Thus,
\begin{align*}
    f'(k)\geq\left(1-2k+k(1-k)(d-1)\frac{N_1'}{N_1}\right)\cdot X.
\end{align*}

Since $f'_{x,\varepsilon}(0) > 0$ and $f'_{x,\varepsilon}(1) < 0$,  $k^\star \in [k_0, 1]$ when function $f_{x, \varepsilon}(k^\star)$ attains maximum where $k_0$ is the least root of $f'_{x,\varepsilon}(k) = 0$. We know $g(k) = (1-2k+k(1-k)(d-1)\frac{N_1'}{N_1})X = 0$ has a root in the interval $[0, k_0]$, namely $\underline{k}$, because $g(0) > 0$ and $g(k_0) \leq f'(k_0) = 0$. Thus, the optimal solution $k^*$ to Programming \eqref{obj:simplied-k-poly} is lower-bounded by $\underline{k}$.

By noticing $X > 0$, it's sufficient to solve $1 - 2k + k(1-k)\cdot(d - 1)\cdot(N_1'/N_1) = 0$ which is equivalent to the following quadratic equation
\begin{align*}
    (1-x)(d+1) k^2 - (2+d(1-x))k + 1 = 0.
\end{align*}
The two roots are
\begin{align*}
    k_1 &=\frac{2+d(1-x)-\sqrt{d^2(1-x)^2+4x}}{2 (d+1) (1-x)},\\
    k_2 &=\frac{2+d(1-x)+\sqrt{d^2(1-x)^2+4x}}{2 (d+1) (1-x)} > 1.
\end{align*}
By discarding the root $k_2$, we can conclude our lemma.
\qed

\section{Proof of Lemma \ref{lem:monotonicity-eps}}
\label{appendix:proof-lem:monotonicity-eps}
Showing $\varphi_{x,\varepsilon}(k)$ is always monotonically increasing with respect to $\varepsilon$ is equivalent to prove its logarithmic derivative with respect to $\varepsilon$, i.e.,
    \begin{align*}
        \frac{\partial \ln \varphi_{x,\varepsilon}(k)}{\partial \varepsilon}&=\frac{\partial \ln \mathrm{WEL}(1)}{\partial \varepsilon}-\frac{\partial \ln \mathrm{WEL}(k)}{\partial \varepsilon}\\
        &=\left(\frac{\partial\mathrm{WEL}(1)}{\partial \varepsilon}\right)/{\mathrm{WEL}(1)}-\left(\frac{\partial\mathrm{WEL}(k)}{\partial \varepsilon}\right)/{\mathrm{WEL}(k)},
    \end{align*}
    is always non-negative. Let $R(k)=\left(\frac{\partial\mathrm{WEL}(k)}{\partial \varepsilon}\right)/\mathrm{WEL}(k)$. It is sufficient to prove that $R(k)$ is monotonically increasing with respect to $k$ on $[0,1]$. Notice that $\partial\mathrm{WEL}(0)/\partial \varepsilon=0$ and $\mathrm{WEL}(0)=0$ by the definition of $\mathrm{WEL}(k)$. And since $\partial\mathrm{WEL}(k)/\partial \varepsilon$ and $\mathrm{WEL}(k)$ are both continuous functions on $[0,1]$ and differentiable on $(0,1)$, we can use the following lemma to prove the monotonicity of $R(k)$.
    \begin{lemma}[Lemma 2.2 in \cite{AVV93}]
        Let $f,g$ be continuous functions defined in $[a,b]$ and differentiable in $(a, b)$. Suppose that $f(a)=g(a)=0$ and $g'(x)\neq 0$ for all $x\in(a,b)$. If $f'/g'$ is increasing on $[a,b]$ then so is $f/g$.
    \end{lemma}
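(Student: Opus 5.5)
The plan is to prove the monotone form of l'Hôpital's rule directly from the Cauchy mean value theorem. I will not differentiate $f/g$. Instead I will write $f(y)/g(y)$ as a convex combination of $f(x)/g(x)$ and a difference quotient, for $a<x<y\le b$. The claim to establish is that $f/g$ is increasing on $(a,b]$, where it is well defined.

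\emph{Step 1 (sign and monotonicity of $g$).} Since $g'$ is a derivative and never vanishes on $(a,b)$, Darboux's theorem (derivatives have the intermediate value property) shows that $g'$ has a constant sign on $(a,b)$. Hence $g$ is strictly monotone on $[a,b]$. Together with $g(a)=0$, this gives that $g(x)$ is nonzero and has the same sign as $g'$ for every $x\in(a,b]$. Moreover, for $a<x<y\le b$ the quantities $g(x)$, $g(y)$ and $g(y)-g(x)$ all share this sign, with $|g(x)|<|g(y)|$. This is the step I expect to need the most care, because it is exactly where the hypothesis $g'\neq 0$ enters: it guarantees that $f/g$ is defined and that the weights introduced below are nonnegative.

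\emph{Step 2 (Cauchy mean value bounds).} Fix $a<x<y\le b$. Apply the Cauchy mean value theorem on $[a,x]$, using $f(a)=g(a)=0$. This gives $\xi\in(a,x)$ with $f(x)/g(x)=f'(\xi)/g'(\xi)$. Apply it again on $[x,y]$, which gives $\eta\in(x,y)$ with
\begin{align*}
\frac{f(y)-f(x)}{g(y)-g(x)}=\frac{f'(\eta)}{g'(\eta)}.
\end{align*}
Since $\xi<\eta$ and $f'/g'$ is increasing, we obtain
\begin{align*}
\frac{f(y)-f(x)}{g(y)-g(x)}\geq \frac{f(x)}{g(x)}.
\end{align*}

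\emph{Step 3 (convex combination).} Write
\begin{align*}
\frac{f(y)}{g(y)}=\frac{g(x)}{g(y)}\cdot\frac{f(x)}{g(x)}+\frac{g(y)-g(x)}{g(y)}\cdot\frac{f(y)-f(x)}{g(y)-g(x)}.
\end{align*}
By Step 1 the two weights are nonnegative and they sum to $1$. By Step 2 the second ratio dominates the first. Hence $f(y)/g(y)\geq f(x)/g(x)$, so $f/g$ is increasing on $(a,b]$.

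Two remarks complete the argument. If $f'/g'$ is strictly increasing, the inequality in Step 2 is strict, and the second weight is positive, so the conclusion is also strict. If one wants the statement to include the endpoint $a$, define $f/g$ there by its limit. This limit exists by monotonicity and equals $\lim_{t\to a^+}f'(t)/g'(t)$ by the representation with $\xi$ in Step 2. With this definition monotonicity on all of $[a,b]$ follows.
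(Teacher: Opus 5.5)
Your proof is correct. The paper itself gives no proof and simply cites \cite{AVV93}. The standard argument for this monotone form of l'H\^opital's rule is different from yours: it differentiates the quotient.

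\emph{The derivative route.} On $(a,b)$ one writes $(f/g)'=\frac{g'}{g}\bigl(\frac{f'}{g'}-\frac{f}{g}\bigr)$. A single Cauchy mean value step on $[a,x]$ replaces $f(x)/g(x)$ by $f'(\xi)/g'(\xi)$ with $\xi<x$, so the bracket is nonnegative. The factor $g'/g$ is positive by the same sign argument as your Step~1, and the derivative test finishes.

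\emph{Your route.} You apply Cauchy's theorem twice, on $[a,x]$ and on $[x,y]$, and close with the convex-combination identity. You therefore never need the quotient to be differentiable.

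\emph{Comparison.} The derivative route is slightly shorter and gives an explicit formula for $(f/g)'$, from which strictness and quantitative information can be read off. Your route compares values only. It shows exactly where $f(a)=g(a)=0$ and the sign of $g$ are used, namely in the nonnegativity of the weights $g(x)/g(y)$ and $(g(y)-g(x))/g(y)$. It is also the continuous counterpart of the mediant inequality. It therefore transfers verbatim to the discrete statement that $\bigl(\sum_{i\le n}a_i\bigr)/\bigl(\sum_{i\le n}b_i\bigr)$ increases in $n$ whenever $b_i>0$ and $a_i/b_i$ increases.

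One small precision in your closing remark: the limit of $f/g$ at $a$ exists only in the extended sense and can equal $-\infty$. For example, take $g(x)=x-a$ and $f(x)=-\sqrt{x-a}$. This does not affect the monotonicity conclusion.
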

    
    Thus, it suffices to prove $Q(k)$ defined below is increasing.
    \begin{align*}
        Q(k)=\frac{\partial^2\mathrm{WEL}(k)}{\partial k\partial\varepsilon}/\frac{\partial\mathrm{WEL}(k)}{\partial k}=\frac{\partial^2\psi(k)}{\partial k\partial\varepsilon}/\frac{\partial\psi(k)}{\partial k}=\frac{\partial \ln \psi'_{x,\varepsilon}(k) }{\partial \varepsilon}.
    \end{align*}

    We have that
    \begin{align*}
        \ln \psi'_{x,\varepsilon}(k)&=\ln \left(\frac{\alpha\varepsilon x(1-x)}{1-(1-\varepsilon)x}\cdot\left(N_1^{d}-N_2^{d}\right)\right)\\
        &=\ln\left(\alpha x(1-x)\right)-\ln\left(1-(1-\varepsilon)x\right)+\ln\varepsilon+\ln\left(N_1^{d}-N_2^{d}\right).
    \end{align*}
    Recall $N_1=1-k(1-x)$ and $N_2=x(1-(1-k)\varepsilon)$. Thus,
    \begin{align*}
        Q(k)&=\frac{\partial \ln \psi'_{x,\varepsilon}(k) }{\partial \varepsilon}\\
        &=-\frac{x}{1-(1-\varepsilon)x}+\frac{1}{\varepsilon}+\frac{d\cdot x\cdot (1-k)\cdot N_2^{d-1}}{N_1^{d}-N_2^{d}}.
    \end{align*}
    We observe that only the third term contains $k$. Define  
    \begin{align*}
        P(k)=\frac{dx(1-k)N_2^{d-1}}{N_1^{d}-N_2^{d}}.
    \end{align*}
    It suffices to prove $P(k)$ is increasing. Let $z=N_1/N_2$, we can rewrite $P(k)$ as 
    \begin{align*}
        P(k)=\frac{d\cdot x}{(1-x)+x\varepsilon}\cdot\frac{(N_1-N_2)\cdot N_2^{d-1}}{N_1^{d}-N_2^{d}}=\frac{d\cdot x}{(1-x)+x\varepsilon}\cdot\frac{z-1}{z^{d}-1}=\frac{d\cdot x}{(1-x)+x\varepsilon}\cdot\frac{1}{\sum_{i=0}^{d-1}z^i}.
    \end{align*}
    It is not hard to see $P(k)$ is strictly decreasing with respect to $z$. Together with the fact that $z=N_1/N_2=(1-k(1-x))/(x-\varepsilon x(1-k))$ is strictly decreasing with respect to $k$. $P(k)$ is increasing with respect $k$.
    
\section{Completing the Proof of Theorem \ref{thm:PoA-mono-ub}}
\label{appendix:proof-completing-mono-ub}

We need to show $\varphi_{x,1}(\underline{k})\leq 2$ for any $x\in(0,1)$ and any positive integer $d$. We can bound it as follows.

    \begin{align*}
        \varphi_{x,1}(\underline{k})&=\frac{x-x^{d+1}}{x-(x(1-\underline{k}(1-x))^{d+1}+(1-x)(\underline{k}x)^{d+1})}\\
        &\leq \frac{x-x^{d+1}}{x-(x(1-\underline{k}(1-x))^{d+1}+(1-x)(x^{d+1}\frac{1-(1-\underline{k}(1-x))^{d+1}}{1-x^{d+1}})}\\
        &=\frac{1-x^{d+1}}{(1-(1-\underline{k}(1-x))^{d+1})}
    \end{align*}

    The inequality holds since $1-k^{d+1}(1-x)^{d+1}\geq(1-k(1-x))^{d+1}$ for any $k \in [0,1]$. Let
    \begin{align*}
        r=\underline{k}(1-x)=\frac{2+d(1-x)-\sqrt{d^2(1-x)^2+4x}}{2(d+1)}.
    \end{align*}

    Notice that $r\in(0,1/(d+1))$ when $x\in(0,1)$. Solving $(2(d+1) r - 2 - d(1-x))^2 = d^2 (1-x)^2 + 4x$ gives $x = \frac{(1 - r)(1 - (d+1) r)}{1 - dr}$.

    To prove $\varphi_{x,1}(\underline{k})\leq 2$, it suffices to show
    $\frac{1-x^{d+1}}{1-(1-r)^{d+1}}\leq 2$
    for $r\in(0,1/(d+1))$, which is equivalent to
    \begin{align*}
        f(r)=(\frac{1-(d+1) r}{1-dr})^{d+1}+\frac{1}{(1-r)^{d+1}}\geq 2.
    \end{align*}

    Since $f(0)=2$, it suffices to show $f'(r) \geq 0$,
    which reduces to showing $h(r)=(d+2) \ln(1 - dr) - (d+2) \ln(1 - r) - d \ln(1 - (d+1) r)>0$ for $r\in(0,1/(d+1))$.
    Since $h(0)=0$ and
    \begin{align*}
        h'(r)=\frac{d(d+2)r^2}{(1-r)(1-d r)(1-(d+1)r)}>0,
    \end{align*}
    the result follows.
\qed

\section{Full Proof of Theorem \ref{thm:PoA-mono-2-ub}}
\label{appendix:proof-thm-mono-2-ub-full}

We show that the bound $27/19$ is attained as $x\to 0$ and $\varepsilon \to 0$, while the bound $2$ of Theorem \ref{thm:PoA-mono-ub} is attained as $x\to1$ and $\varepsilon=1$. Throughout, the price function is $P(z)=\alpha z^2$. The proof has three steps. Step 1 derives the analytic form of the optimal operation $k^\star$ and its monotonicity in $x$ and $\varepsilon$ (Lemma \ref{lem:opt-2-mono}). Step 2 expresses the PoA purely in terms of $k^\star$. Step 3 shows this expression is monotone in $k^\star$ (Lemma \ref{lem:monotone-k-2}). Combining the three steps locates the maximizing instance.

\paragraph{Step 1: optimal operation $k^\star$.}

\begin{lemma}
\label{lem:opt-2-mono}
    The optimization problem $\max_{k\in[0,1]} Z(k)=t_1\cdot k(1-x)\cdot (P(N_1)-P(N_2))$ has a unique solution
    \begin{align*}
    k^\star=\begin{cases}
        \frac{2(1-\varepsilon x)-\sqrt{(\varepsilon^2+3)x^2 - 2 \varepsilon x+ 1 }}{3(1-x-\varepsilon x)}&\quad \text{if } \varepsilon\neq (1-x)/x\\
        1/2&\quad \text{otherwise}.
    \end{cases}
    \end{align*}
    And $k^\star$ is monotonically increasing with respect to $x$ and $\varepsilon$.
\end{lemma}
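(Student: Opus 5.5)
The plan is to reduce the maximization to a one-variable cubic, solve its first-order condition explicitly, and get monotonicity from the implicit function theorem applied to that condition. By the factorization in the proof of Lemma~\ref{lemma:lb-k} with $d=2$, $X=N_1+N_2$. So $Z(k)=\alpha(1-x)x\varepsilon\, f(k)$ with
\begin{align*}
f(k)=k(1-k)(A-Ck),\qquad A=1+x(1-\varepsilon),\quad C=1-x-\varepsilon x,
\end{align*}
since $N_1+N_2=A-Ck$. Because $N_1,N_2>0$, we have $f>0$ on $(0,1)$ and $f(0)=f(1)=0$. Expanding gives
\begin{align*}
F(k):=f'(k)=3Ck^2-4(1-\varepsilon x)k+A .
\end{align*}
At the endpoints, $F(0)=A>0$ and $F(1)=3C-4+4\varepsilon x+A=-2x<0$.

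\textbf{Existence, uniqueness and the closed form.} Since $F$ has degree at most two and changes sign on $[0,1]$, it has exactly one root in $(0,1)$. That root is the unique interior critical point, and it is a maximum. The discriminant satisfies
\begin{align*}
4(1-\varepsilon x)^2-3CA=(1-\varepsilon x)^2+3x^2=(\varepsilon^2+3)x^2-2\varepsilon x+1,
\end{align*}
where the first equality uses $CA=(1-\varepsilon x)^2-x^2$. This gives the displayed formula, and I would justify the choice of the minus sign by cases.
\begin{itemize}
\item If $C>0$, both roots are positive and $F(1)<0$ places $1$ between them, so the smaller root is the relevant one.
\item If $C<0$, the roots have product $A/(3C)<0$. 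The discriminant exceeds $4(1-\varepsilon x)^2$ because $-3CA>0$, so the minus-sign numerator is negative; over the negative denominator $3C$ it gives the positive root.
\item If $C=0$, i.e.\ $\varepsilon=(1-x)/x$, then $F$ is linear with $1-\varepsilon x=x$ and $A=2x$. Hence $k^\star=2x/(4x)=1/2$.
\end{itemize}

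\textbf{Monotonicity.} The key observation is that $F$ is affine in $x$:
\begin{align*}
F(k;x,\varepsilon)=(3k-1)(k-1)+x\,\Phi(k,\varepsilon),\qquad \Phi=-3(1+\varepsilon)k^2+4\varepsilon k+(1-\varepsilon),
\end{align*}
and $\Phi=\partial F/\partial x$. The root $k^\star$ is simple and $F$ goes from positive to negative there, so $\partial_k F(k^\star)<0$. The implicit function theorem then gives $\mathrm{sign}(\partial k^\star/\partial x)=\mathrm{sign}(\partial_x F)$ and $\mathrm{sign}(\partial k^\star/\partial \varepsilon)=\mathrm{sign}(\partial_\varepsilon F)$.

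Next I would show $k^\star>1/3$. A direct computation gives $F(1/3)=2x/3>0$, and $F>0$ on $[0,k^\star)$, $F<0$ on $(k^\star,1]$, so $k^\star$ lies to the right of $1/3$.

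With this, both partial derivatives are positive at $k^\star$.
\begin{itemize}
\item For $\varepsilon$: $\partial_\varepsilon F=-x(3k^2-4k+1)=x(3k-1)(1-k)>0$.
\item For $x$: from $F(k^\star)=0$ we get $x\,\partial_xF=-(3k^\star-1)(k^\star-1)=(3k^\star-1)(1-k^\star)>0$.
\end{itemize}
Hence $k^\star$ is increasing in both $x$ and $\varepsilon$. The degenerate line $C=0$ needs no separate treatment, since $F$ is smooth in $(x,\varepsilon)$ across it and the implicit-function argument does not use $C\neq0$.

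The main obstacle I expect is this sign analysis: evaluated naively, $\partial_xF$ has no evident sign. The affine-in-$x$ decomposition together with the bound $k^\star>1/3$ is what resolves it.
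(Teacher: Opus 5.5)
Your proof is correct, and its first half is essentially the paper's argument: the same reduction to $f(k)=k(1-k)(A-Ck)$, the same derivative $3Ck^2-4(1-\varepsilon x)k+A$, the same endpoint signs $A>0$ and $-2x<0$, and the same case split on the sign of $C=1-x-\varepsilon x$ to select the minus-sign root. Your degree count makes uniqueness of the interior critical point, and hence of the maximizer, more explicit than the paper does. The monotonicity half takes a genuinely different route. The paper differentiates the closed form $k_2=P/Q$ by the quotient rule and reduces both $\partial k_2/\partial x$ and $\partial k_2/\partial\varepsilon$ to the sign of $2\sqrt{S}-((3-\varepsilon)x+1)$, where $S=(\varepsilon^2+3)x^2-2\varepsilon x+1$. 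It settles that sign by squaring, via the identity $4S-((3-\varepsilon)x+1)^2=3((1+\varepsilon)x-1)^2$. You instead apply the implicit function theorem to $F(k^\star;x,\varepsilon)=0$; the affine-in-$x$ decomposition together with $F(1/3)=2x/3>0$ gives both signs with no radicals at all. The paper's computation is fully explicit, but it is only valid where $1-x-\varepsilon x\neq 0$. Passing across the line $\varepsilon=(1-x)/x$, which every $x$-slice crosses (and every $\varepsilon$-slice with $x>1/2$), therefore needs a continuity argument that the paper leaves implicit. Your argument needs only $\partial_k F(k^\star)<0$, so it covers that line uniformly and gives strictly positive partial derivatives on the whole parameter square. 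It also yields $k^\star>1/3$ directly. The same evaluation trick gives $F(1/\sqrt{3})=-(4/\sqrt{3}-2)(1-\varepsilon x)<0$, i.e.\ $k^\star<\sqrt{3}/3$. This is exactly the range $(1/3,\sqrt{3}/3)$ used in Steps 2--3 of the paper's proof of Theorem~\ref{thm:PoA-mono-2-ub}, which the paper instead obtains from monotonicity and limiting values.
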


\noindent\emph{Proof of Lemma \ref{lem:opt-2-mono}.}
 We rewrite $Z(k)$ as
    \begin{align*}
        Z(k)&=3k(1-x)\cdot t_1\cdot \left(\alpha(1-k(1-x))^2-\alpha((1-\varepsilon)x+k\cdot\varepsilon x)^2\right)\\
    &=3\varepsilon x(1-x)\alpha \cdot k(1-k)(1+x-\varepsilon x-k(1-x-\varepsilon x)).
    \end{align*}
    Since $x$,$\varepsilon$, and $\alpha$ are constants fixed by the instance, it is sufficient to maximize $f_{x,\varepsilon}(k)=k(1-k)(1+x-\varepsilon x-k(1-x-\varepsilon x))$. When $\varepsilon= (1-x)/x$, $f_{x,\varepsilon}(k)=2x\cdot k(1-k)$, which is maximized when $k=1/2$. Otherwise, take the derivative of $f_{x,\varepsilon}(k)$, we have 
    \begin{align*}
    f'_{x,\varepsilon}(k)&=(1-k)(1+x-\varepsilon x-k(1-x-\varepsilon x))-k(1+x-\varepsilon x-k(1-x-\varepsilon x))-k(1-k)(1-x-\varepsilon x)\\
    &=(1-2k)(1+x-\varepsilon x-k(1-x-\varepsilon x))-k(1-k)(1-x-\varepsilon x)\\
    &=3(1-x-\varepsilon x)k^2-4(1-\varepsilon x)k+1+x-\varepsilon x.
    \end{align*}
    
The optimal can be determined by solving $f'_{x,\varepsilon}(k)=0$ for $k$, which is a quadratic equation. The roots of this equation are
\begin{align*}
    k_1&=\frac{2(1-\varepsilon x)+\sqrt{(\varepsilon^2+3)x^2 - 2 \varepsilon x+ 1 }}{3(1-x-\varepsilon x)}\\
    k_2&=\frac{2(1-\varepsilon x)-\sqrt{(\varepsilon^2+3)x^2 - 2 \varepsilon x+ 1 }}{3(1-x-\varepsilon x)}
\end{align*}

Notice that $f'_{x,\varepsilon}(0)=1+x-\varepsilon x > 0$ and $f'_{x,\varepsilon}(1)=-2x < 0$. Together with the fact that $f'_{x,\varepsilon}$ is a continuous function on $[0,1]$, $f_{x,\varepsilon}$ must have a maximum on $[0,1]$. And this maximum must be one of the two roots of $f'_{x,\varepsilon}(k)=0$. We will then prove $k_2$ is our desired root in both cases $\varepsilon>(1-x)/x$ and $\varepsilon<(1-x)/x$. 
    
When $\varepsilon>(1-x)/x$,  we notice $k_1<0$ and $k_2>0$. So, $k_2$ must be the root in $[0,1]$. 
When $\varepsilon<(1-x)/x$, we have that $\lim_{k\to\infty}f'_{x,\varepsilon}(k)=\infty$. Together with the fact $f'_{x,\varepsilon}(1)<0$, there must be a root that is greater than $1$. By noticing $3(1-x-\varepsilon x)>0$, we identify $k_2$, the smaller root, lies between $0$ and $1$.

Thus, we have
\begin{align*}
    k^*=\begin{cases}
        k_2&\quad \text{if } \varepsilon\neq (1-x)/x\\
        1/2&\quad \text{otherwise}.
    \end{cases}
\end{align*}

We prove $k^*$ is monotonically increasing with respect to $x$ and $\varepsilon$ by showing $\partial k_2/\partial x\geq0$ and $\partial k_2/\partial \varepsilon\geq0$. Define $P=2(1-\varepsilon x)-\sqrt{(\varepsilon^2+3)x^2 - 2 \varepsilon x+ 1 }$ and $Q=3(1-x-\varepsilon x)$. It is sufficient to prove that $g(\varepsilon,x)=Q\frac{\partial P}{\partial x}-P\frac{\partial Q}{\partial x}\geq 0$ and $h(\varepsilon,x)=Q\frac{\partial P}{\partial \varepsilon}-P\frac{\partial Q}{\partial \varepsilon}\geq 0$.
    Let $S=(\varepsilon^2+3)x^2-2\varepsilon x+1$, we have that
    \begin{align*}
        g(\varepsilon,x)&=Q\frac{\partial P}{\partial x}-P\frac{\partial Q}{\partial x}\\
        &=Q\cdot \left(-2\varepsilon - \frac{(\varepsilon^2+3)x - \varepsilon}{\sqrt{S}}\right)-P\cdot(-3(1+\varepsilon))\\
        &= 3(1-x-\varepsilon x) \cdot \left( \frac{-2\varepsilon\sqrt{S} - ((\varepsilon^2+3)x - \varepsilon)}{\sqrt{S}} \right)- \left( 2(1-\varepsilon x) - \sqrt{S} \right) \cdot (-3(1+\varepsilon))\\
        &= \frac{3}{\sqrt{S}} \left((1-x-\varepsilon x) (-2\varepsilon\sqrt{S} - (\varepsilon^2+3)x + \varepsilon) + (1+\varepsilon)\sqrt{S} ( 2(1-\varepsilon x) - \sqrt{S} ) \right)\\
        &=\frac{3\cdot(2 \sqrt{S}- ((3 - \varepsilon) x + 1))}{\sqrt{S}},
    \end{align*}
    and
    \begin{align*}
        h(\varepsilon,x)&=Q\frac{\partial P}{\partial \varepsilon}-P\frac{\partial Q}{\partial \varepsilon}\\
        &=Q\cdot (-2x - \frac{\varepsilon x^2 - x}{\sqrt{S}})-P\cdot(-3x)\\
        &=3(1-x-\varepsilon x) \cdot \left(-2x - \frac{\varepsilon x^2 - x}{\sqrt{S}})-\left( 2(1-\varepsilon x\right) - \sqrt{S} \right) \cdot(-3x)\\
        &= \frac{3}{\sqrt{S}} \left( (1-x-\varepsilon x) ( -2x\sqrt{S} - (\varepsilon x^2 - x) ) - \sqrt{S} ( 2(1-\varepsilon x) - \sqrt{S} ) (-x) \right)\\
        &= \frac{3\cdot x^2\cdot(2 \sqrt{S}- ((3 - \varepsilon) x + 1))}{\sqrt{S}}.
    \end{align*}
    Our final step is to show $2 \sqrt{S}- ((3 - \varepsilon) x + 1)\geq0$. Notice that both $\sqrt{S}$ and $(3 - \varepsilon) x + 1$ are non-negative, so it is sufficient to prove that 
    \begin{align}
    \label{inq:last-inq-partial-x}
        (2 \sqrt{S})^2-((3 - \varepsilon) x + 1)^2\geq 0
    \end{align}
    And Inequality \eqref{inq:last-inq-partial-x} holds since
    \begin{align*}
        (2 \sqrt{S})^2-((3 - \varepsilon) x + 1)^2&=(2 \sqrt{(\varepsilon^2+3)x^2-2\varepsilon x+1})^2-((3 - \varepsilon) x + 1)^2\\
        &=4((\varepsilon^2+3)x^2-2\varepsilon x+1)-((\varepsilon^2-6\varepsilon+9)x^2+2(3-\varepsilon)x+1)\\
        &=(3\varepsilon^2+6\varepsilon+3)x^2-(6\varepsilon+6)x+3\\
        &= 3((\varepsilon+1)x-1)^2\geq 0.
    \end{align*}
\qed

\paragraph{Step 2: the PoA in terms of $k^\star$.}

We represent the Price of Anarchy by $x$, $\varepsilon$, and $k^\star$:
\begin{align*}
    PoA
    &=\frac{\varepsilon x - 2x - 1}{k^\star \left((\varepsilon x + x - 1)k^{\star2} - 3(\varepsilon x - 1)k^\star + 3(\varepsilon x - x - 1) \right)}.
\end{align*}

By Lemma \ref{lem:opt-2-mono}, we have
\begin{align}
\label{eq:opt-2-mono}
    x(3k^{\star2} - 1) = (1 - \varepsilon x)(3k^{\star2} - 4k^\star + 1).
\end{align}

Let $S=3k^{\star2}-4k^\star+1$. Equality \eqref{eq:opt-2-mono} implies
$(1 - \varepsilon x) = x\cdot\frac{3k^{\star2} - 1}{S}$.
Substituting into the expression of $PoA$, we have
    \begin{align*}
        PoA &= \frac{x\cdot(\frac{1-3k^{\star2}}{S}-2)}{x\cdot k^\star\cdot\left((\frac{1-3k^{\star2}}{S}+1)k^{\star2}-3(\frac{1-3k^{\star2}}{S})k^\star+3(\frac{1-3k^{\star2}}{S}-1)\right)}\\
        &=\frac{1-3k^{\star2}-2S}{k^\star((1-3k^{\star2}+S)k^{\star2}-3(1-3k^{\star2})k^\star+3(1-3k^{\star2}-S))}=-\frac{9k^{\star2}-8k^\star+1}{k^{\star2}(5k^{\star2}-16k^\star+9)}.
    \end{align*}
On $k^\star\in(1/3,\sqrt{3}/3)$ we have $9k^{\star2}-8k^\star+1<0$, so the expression above is positive, as a Price of Anarchy must be.

\paragraph{Step 3: monotonicity in $k^\star$.}

\begin{lemma}
    \label{lem:monotone-k-2}
    $-\dfrac{9k^{\star2}-8k^\star+1}{k^{\star2}(5k^{\star2}-16k^\star+9)}$ is monotonically decreasing with respect to $k^\star$ on $(1/3,\sqrt{3}/3)$.
\end{lemma}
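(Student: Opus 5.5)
The plan is a direct sign analysis of the derivative via the quotient rule; the only care needed is in the polynomial arithmetic. Write the expression as $\Lambda(k)=N(k)/M(k)$ with
\begin{align*}
N(k)=-9k^2+8k-1,\qquad M(k)=k^2(5k^2-16k+9)=5k^4-16k^3+9k^2 .
\end{align*}

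First I check that $M>0$ on $(1/3,\sqrt{3}/3)$, so the quotient is smooth there and the sign of $\Lambda'$ equals the sign of $N'M-NM'$. The roots of $5k^2-16k+9$ are $(8\pm\sqrt{19})/5$, and the smaller one is about $0.728$, which exceeds $\sqrt{3}/3\approx0.577$. Hence $M>0$ on the whole interval. Positivity of $N$ there is already noted in Step 2, although it is not needed for the derivative argument.

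Next I expand the numerator of the derivative, using $N'(k)=-18k+8$ and $M'(k)=20k^3-48k^2+18k$. Expanding both products and subtracting gives
\begin{align*}
N'M-NM' = 90k^5-264k^4+276k^3-120k^2+18k = 6k\,(15k^4-44k^3+46k^2-20k+3).
\end{align*}
The key step is to factor the quartic. Substituting $k=1/3$ makes it vanish (multiplying by $81$ gives $15-132+414-540+243=0$), so $3k-1$ divides it, leaving $5k^3-13k^2+11k-3$. This cubic vanishes at $k=1$, and the remaining quadratic $5k^2-8k+3=(k-1)(5k-3)$ gives
\begin{align*}
15k^4-44k^3+46k^2-20k+3=(3k-1)(k-1)^2(5k-3).
\end{align*}

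Finally I read off the signs on $(1/3,\sqrt{3}/3)$: $6k>0$, $3k-1>0$, $(k-1)^2>0$, and $5k-3<0$ because $\sqrt{3}/3<3/5$. Therefore $N'M-NM'<0$, so $\Lambda'<0$ and $\Lambda$ is strictly decreasing on the interval, as claimed. The only real obstacle is carrying out this expansion and factorization without arithmetic slips. Evaluating $N'M-NM'$ at a sample point such as $k=1/2$ and comparing with the factored form is a cheap sanity check.
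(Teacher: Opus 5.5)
Your proposal is correct and follows the paper's proof essentially line for line. You use the same quotient-rule numerator $QR'-RQ'=6k(k-1)^2(3k-1)(5k-3)$ (your $N'M-NM'$ with $N=-Q$, $M=R$) and the same sign reading on $(1/3,\sqrt{3}/3)$; your extra check that $5k^2-16k+9>0$ there, so the denominator never vanishes, is a point the paper leaves implicit.
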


\noindent\emph{Proof of Lemma \ref{lem:monotone-k-2}.}
 Define 
 \begin{align*}
     Q&=9k^{\star2}-8k^\star+1,\\
     R&=k^{\star2}(5k^{\star2}-16k^\star+9).
 \end{align*}
 The expression equals $-Q/R$, whose derivative is $\frac{1}{R^2}\big(Q\frac{\di R}{\di k^\star}-R\frac{\di Q}{\di k^\star}\big)$. Hence it is monotonically decreasing if, on $k^\star\in(1/3,\sqrt{3}/3)$,
 \begin{align*}
     Q\frac{\di R}{\di k^\star}-R\frac{\di Q}{\di k^\star}\leq 0.
 \end{align*}
 This inequality holds due to following derivation.
    \begin{align*}
        Q\frac{\di R}{\di k^\star}-R\frac{\di Q}{\di k^\star}&=(9k^{\star2} - 8k^\star + 1)(20k^{\star3} - 48k^{\star2} + 18k^\star) - (5k^{\star4} - 16k^{\star3} + 9k^{\star2})(18k^\star - 8)\\
        &= k^\star((180k^{\star4} - 592k^{\star3} + 566k^{\star2} - 192k^\star + 18)-(90k^{\star4} - 328k^{\star3} + 290k^{\star2} - 72k^\star))\\
        &= k^\star(90k^{\star4} - 264k^{\star3} + 276k^{\star2} - 120k^\star + 18)\\
        &=6k^\star(15k^{\star4}-44k^{\star3}+46k^{\star2}-20k^\star+3)\\
        &=6k^\star(k^\star-1)^2(3k^\star-1)(5k^\star-3)<0.
    \end{align*}
    The last inequality holds since $k^\star>0,(k^\star-1)^2>0,3k^\star-1>0,5k^\star-3<0$ on $k^\star\in(1/3,\sqrt{3}/3)$.
\qed

\paragraph{Conclusion.}

By Lemma \ref{lem:opt-2-mono}, $k^\star$ is monotonically increasing in both $x$ and $\varepsilon$, so $k^\star$ ranges over $(1/3,\sqrt{3}/3)$, attaining $1/3$ as $x\to 0,\varepsilon\to0$ and $\sqrt{3}/3$ as $x\to1,\varepsilon=1$. By Lemma \ref{lem:monotone-k-2}, the PoA is decreasing in $k^\star$, hence it is maximized at $k^\star=1/3$. Substituting $k^\star=1/3$ gives $PoA=27/19$. \qed

\section{Full Proof of Theorem \ref{thm:PoA-mono-lb}}
\label{appendix:proof-thm-mono-lb}

Consider an instance $I\in\mathcal{I}_{step}$ with demand
    \begin{align*}
        D(t) =
        \begin{cases}
        1, &t \in [0, \varepsilon] \\
        0, &t \in (\varepsilon, 1]
        \end{cases}
    \end{align*}
    and a monomial price function $P(z)=(d+1)\cdot z^d$ where integer $d\geq 1$ and  $G(z)=z^{d+1}$.

    The improvement on social cost is maximized when the net demand $N(t)=\varepsilon$ for all $t\in[0,1]$ due to Claim \ref{clm:cb-average}. We have
    \begin{align*}
        \mathrm{WEL}(B_{CB})=\varepsilon-\varepsilon^{d+1}.
    \end{align*}
    The decentralized battery deploys
    \begin{align*}
        B_{DCB}(t)=\begin{cases}
            x^\star, &t\in[0,\varepsilon]\\
            -\frac{\varepsilon x^\star}{1-\varepsilon},&t\in(\varepsilon,1]
        \end{cases},
    \end{align*}
    where
    $x^\star=\argmax_{x\in[0,1]}x\cdot((1-x)^{d}-(\frac{\varepsilon x}{1-\varepsilon})^d)$.
    Let $f(x)=x\cdot((1-x)^{d}-(\frac{\varepsilon x}{1-\varepsilon})^d)$. Then
    \begin{align*}
        f'(x)= (1-x)^{d-1} \left( 1 - (d+1)x \right) - (d+1) \left( \frac{\varepsilon x}{1-\varepsilon} \right)^d\leq(1-x)^{d-1} \left( 1 - (d+1)x \right).
    \end{align*}

    Since $f'(0)>0$ and $f'(x)\leq(1-x)^{d-1}(1-(d+1)x)<0$ for $x>1/(d+1)$, the maximizer satisfies $x^\star\leq\widebar{x}=1/(d+1)$. The price of anarchy satisfies
    \begin{align*}
        PoA&=\frac{\varepsilon-\varepsilon^{d+1}}{\varepsilon-\varepsilon\cdot G(1-x^\star)-(1-\varepsilon)G(\frac{\varepsilon x^\star}{1-\varepsilon})}\\
        &\geq \frac{\varepsilon-\varepsilon^{d+1}}{\varepsilon-\varepsilon\cdot G(1-x^\star)}\geq\frac{\varepsilon-\varepsilon^{d+1}}{\varepsilon-\varepsilon\cdot G(1-\widebar{x})}=\frac{1-\varepsilon^d}{1-(\frac{d}{d+1})^{d+1}}.
    \end{align*}
    When $\varepsilon\to0$, $PoA\geq \frac{1}{1-(\frac{d}{d+1})^{d+1}}$. As $d\to\infty$, this approaches $e/(e-1)$.
\qed

\section{Full Proof of Theorem \ref{thm:PoA-linear-n}}
\label{appendix:proof-thm-linear-n}

We prove that for $n$ identical batteries under linear pricing, the game admits a unique pure Nash equilibrium, which is symmetric, and $PoA=(n+1)^2/(n(n+2))$.

Consider $n$ batteries with operations $B_1,\ldots,B_n\in\mathcal{C}\overset{\text{def}}{=}\mathcal{B}\cap\Omega$. Under $P(z)=az+b$, the welfare improvement depends on the aggregate $S=\sum_i B_i$ via $\mathrm{WEL}=\tfrac{1}{2}a(2\langle D,S\rangle - \|S\|^2)$ (extending \eqref{eq:W-linear}), and each battery's revenue is $\mathrm{REV}_i=a(\langle D,B_i\rangle - \langle S,B_i\rangle)$ (extending \eqref{eq:REV-linear}). The revenue of battery $i$ can be rewritten as $\mathrm{REV}_i = a(\langle D - S_{-i}, B_i\rangle - \|B_i\|^2)$ where $S_{-i}=\sum_{j\neq i}B_j$.

Define the potential function $\Phi(B_1,\ldots,B_n)=\langle D,S\rangle - \tfrac{1}{2}\|S\|^2 - \tfrac{1}{2}\sum_{i=1}^n\|B_i\|^2$. One verifies that $\nabla_{B_i}\Phi = D - S - B_i = \frac{1}{a}\nabla_{B_i}\mathrm{REV}_i$, so $\Phi$ is an exact potential for the game. Writing $\Phi$ in vector form as $\Phi(\mathbf{B})=\langle D,\mathbf{1}^\top\mathbf{B}\rangle - \frac{1}{2}\mathbf{B}^\top(J+I)\mathbf{B}$ where $\mathbf{B}=(B_1,\ldots,B_n)^\top$, $J$ is the $n\times n$ all-ones matrix, and $I$ the identity, the matrix $J+I$ has eigenvalues $n+1$ (multiplicity $1$) and $1$ (multiplicity $n-1$), so it is positive definite. Therefore $\Phi$ is strictly concave on $\mathcal{C}^n$, which guarantees that the Nash equilibrium exists and is unique. By the permutation symmetry of the game, the unique equilibrium is symmetric: $B_i^*=B^*$ for all $i$, with $S^*=nB^*$. The centralized optimum is also symmetric: $B_i^{CB}=B_{CB}$ for all $i$, with $S_{CB}=nB_{CB}$.

At the symmetric equilibrium, each battery maximizes $\langle D-(n-1)B^*,B_i\rangle - \|B_i\|^2$ over $\mathcal{C}$. By Lemma \ref{lemma:variational_inequalities}, the first-order condition gives:
\begin{align}
\label{inq:opt-n-full}
    \langle D-(n+1)B^*,\;B - B^*\rangle \leq 0,\quad \forall B\in\mathcal{C}.
\end{align}

Substituting $B = B_{CB}$ into \eqref{inq:opt-n-full}:
\begin{align}
\label{inq:n-sub-cb}
    \langle D, B_{CB} - B^*\rangle \leq (n+1)\bigl(\langle B^*, B_{CB}\rangle - \|B^*\|^2\bigr).
\end{align}

Substituting $B = 0$ into \eqref{inq:opt-n-full} (recall $0\in\mathcal{C}$):
\begin{align}
\label{inq:n-sub-0}
    \langle D, B^*\rangle \geq (n+1)\|B^*\|^2.
\end{align}

Let $\alpha = (n+1)^2/(n(n+2))$. We show $\alpha g \geq f$ where $f=2\langle D,B_{CB}\rangle - n\|B_{CB}\|^2$ and $g=2\langle D,B^*\rangle - n\|B^*\|^2$. Compute:
\begin{align*}
    \alpha g - f &= 2(\alpha-1)\langle D,B^*\rangle - 2\bigl(\langle D,B_{CB}\rangle - \langle D,B^*\rangle\bigr) - \alpha n\|B^*\|^2 + n\|B_{CB}\|^2.
\end{align*}
Applying \eqref{inq:n-sub-cb} to bound the second term and \eqref{inq:n-sub-0} to bound the first:
\begin{align*}
    \alpha g - f &\geq 2(\alpha-1)(n+1)\|B^*\|^2 - 2(n+1)\langle B^*,B_{CB}\rangle + 2(n+1)\|B^*\|^2 - \alpha n\|B^*\|^2 + n\|B_{CB}\|^2\\
    &= \bigl[2\alpha(n+1) - \alpha n\bigr]\|B^*\|^2 - 2(n+1)\langle B^*,B_{CB}\rangle + n\|B_{CB}\|^2\\
    &= \alpha(n+2)\|B^*\|^2 - 2(n+1)\langle B^*,B_{CB}\rangle + n\|B_{CB}\|^2.
\end{align*}
With $\alpha = (n+1)^2/(n(n+2))$, we have $\alpha(n+2)=(n+1)^2/n$, so
\begin{align*}
    \alpha g - f &\geq \frac{(n+1)^2}{n}\|B^*\|^2 - 2(n+1)\langle B^*,B_{CB}\rangle + n\|B_{CB}\|^2 = n\left\|\frac{n+1}{n}B^* - B_{CB}\right\|^2 \geq 0.
\end{align*}
This establishes $PoA \leq (n+1)^2/(n(n+2))$.

To show tightness, consider the $\mathcal{I}_{\text{step}}$ instance with $D(t)=1$ for $t\in[0,1/2]$, $D(t)=0$ for $t\in(1/2,1]$, $P(z)=az$, and $\Omega$ non-binding. Write $\widetilde{D}=D-\widebar{D}$ for the mean-zero fluctuation, where $\widebar{D}=1/2$, so $\widetilde{D}(t)=1/2$ for $t\in[0,1/2]$ and $\widetilde{D}(t)=-1/2$ for $t\in(1/2,1]$. The centralized optimum assigns $B_{CB}=\widetilde{D}/n$ to each battery, yielding aggregate $S_{CB}=\widetilde{D}$ and constant net demand $N(t)\equiv 1/2$. We compute $\langle D,B_{CB}\rangle = \|\widetilde{D}\|^2/n = 1/(4n)$ and $\|B_{CB}\|^2=\|\widetilde{D}\|^2/n^2=1/(4n^2)$, giving $\mathrm{WEL}_{CB}=\frac{1}{2}a(2n\cdot\frac{1}{4n}-n^2\cdot\frac{1}{4n^2})=a/8$.

At the symmetric Nash equilibrium, the unconstrained first-order condition $D-(n+1)B^*=\lambda$ together with periodicity $\int_0^1 B^*\di t=0$ gives $\lambda=\widebar{D}=1/2$ and $B^*=\widetilde{D}/(n+1)$. We compute $\langle D,B^*\rangle = 1/(4(n+1))$ and $\|B^*\|^2=1/(4(n+1)^2)$, giving
\begin{align*}
    \mathrm{WEL}_{NE} = \frac{a}{2}\left(\frac{2n}{4(n+1)}-\frac{n^2}{4(n+1)^2}\right)=\frac{an(n+2)}{8(n+1)^2}.
\end{align*}
Therefore $PoA = (a/8) \big/ \bigl(an(n+2)/(8(n+1)^2)\bigr) = (n+1)^2/(n(n+2))$. \qed

\end{document}